\author{{Zhiqiang Wei, Derrick Wing Kwan Ng, Jinhong Yuan, and Hui-Ming Wang}
\thanks{Zhiqiang Wei, Derrick Wing Kwan Ng, and Jinhong Yuan are with the School of Electrical
Engineering and Telecommunications, the University of New South Wales, Australia (email: zhiqiang.wei@student.unsw.edu.au; w.k.ng@unsw.edu.au; j.yuan@unsw.edu.au).
Hui-Ming Wang is with the Ministry of Education Key Lab for Intelligent
Networks and Network Security, Xi¡¯an Jiaotong University, China (email: xjbswhm@gmail.com). This paper has been presented in part at the IEEE Globecom 2016 \cite{Wei2016NOMA}.\vspace*{-12mm}}}
\title{\vspace*{-10mm}Optimal Resource Allocation for Power-Efficient MC-NOMA with Imperfect Channel State Information}
\newtheorem{Thm}{Theorem}
\newtheorem{Def}{Definition}
\newtheorem{T-Prob}{Transformed Problem}
\DeclareMathOperator{\maxo}{maximize}
\DeclareMathOperator{\mino}{minimize}
\newtheorem{Remark}{Remark}
\newcommand{\abs}[1]{\lvert#1\rvert}
\begin{document}
\maketitle

\begin{abstract}
In this paper, we study power-efficient resource allocation for multicarrier non-orthogonal multiple access (MC-NOMA) systems.
The resource allocation algorithm design is formulated as a non-convex optimization problem which jointly designs the power allocation, rate allocation, user scheduling, and successive interference cancellation (SIC) decoding policy for minimizing the total transmit power.
The proposed framework takes into account the imperfection of channel state information at transmitter (CSIT) and quality of service (QoS) requirements of users.
To facilitate the design of optimal SIC decoding policy on each subcarrier, we define a \emph{channel-to-noise ratio outage threshold}.
Subsequently, the considered non-convex optimization problem is recast as a generalized linear multiplicative programming problem, for which a globally optimal solution is obtained via employing the branch-and-bound approach.
The optimal resource allocation policy serves as a system performance benchmark due to its high computational complexity.
To strike a balance between system performance and computational complexity, we propose a suboptimal iterative resource allocation algorithm based on difference of convex programming.
Simulation results demonstrate that the suboptimal scheme achieves a close-to-optimal performance. Also, both proposed schemes provide significant transmit power savings than that of conventional orthogonal multiple access (OMA) schemes.
\end{abstract}
%

\vspace*{-5mm}
\section{Introduction}
Recently, non-orthogonal multiple access (NOMA)
has received considerable attentions as a promising multiple access technique for the fifth-generation (5G) wireless communication networks\cite{Ding2015b,Dai2015,WeiSurvey2016}.
The basic rationale behind NOMA is to exploit the power domain for users multiplexing and to employ successive interference cancellation (SIC) at receivers to remove the multiple access interference (MAI).
In contrast to conventional orthogonal multiple access (OMA) schemes, NOMA is a promising solution to fulfil the demanding requirements of the 5G communication systems \cite{Andrews2014,wong2017key}, such as massive connectivity, low latency, high spectral efficiency, and enhanced user fairness.
In particular, NOMA can support overloaded transmission and increase the system throughput for given limited spectrum resources by enabling simultaneous transmission of multiple users utilizing the same frequency resources.
Also, multiple users with heterogeneous traffic requests can be served concurrently on the same frequency band to reduce the latency and to enhance the resource allocation fairness.
As a result, a preliminary version of NOMA, multiuser superposition transmission (MUST) scheme, has been proposed in the 3rd generation partnership project long-term evolution advanced (3GPP-LTE-A) networks\cite{Access2015}.

Meanwhile, in academia, extensive studies on NOMA schemes have been conducted.
In previous works \cite{Saito2013,Benjebbour2013,Xu2015,Ding2014,Dingtobepublished,Yang2016}, it has been shown that NOMA offers considerable performance gains over OMA in terms both of spectral efficiency and performance outage probability.
For instance, several pioneering works, e.g., \cite{Saito2013,Benjebbour2013}, have evaluated the performance of NOMA through system level simulations, showing that the overall cell throughput, cell-edge user throughput, and the degrees of fairness achieved by NOMA are all superior to those of OMA.
From an information theoretic perspective, the authors in \cite{Xu2015} proved that NOMA outperforms conventional time division multiple access (TDMA) schemes with a high probability in terms of both sum rate and individual rates.
In \cite{Ding2014}, asymptotic ergodic sum rate and outage probability for downlink NOMA systems with randomly deployed users were studied.
Furthermore, the impact of user pairing on performance of NOMA systems was characterized in \cite{Dingtobepublished}, where it has shown that the performance gain of NOMA over OMA in terms of sum rate can be enlarged substantially by pairing users with more distinctive channel conditions.
In \cite{Yang2016}, the authors analyzed the performance degradation of downlink NOMA systems in terms of outage probability and average sum rate due to partial channel state information at transmitter side (CSIT). Nevertheless, all the works on the performance analysis of NOMA schemes are based on the fixed power and subcarrier allocation, which is far from optimal resource allocation. Thus, the maximum potential performance gain brought by NOMA for 5G wireless networks remains unknown.

\textcolor[rgb]{0.00,0.00,0.00}{Resource allocation design plays a crucial role in exploiting the potential performance gain of NOMA systems, especially for multicarrier NOMA (MC-NOMA) systems\cite{Lei2016NOMA,Di2016sub,Sun2016Fullduplex}. Joint design of power and subcarrier allocation for MC-NOMA systems is generally NP-hard\cite{Lei2016NOMA}, and several works have made progresses on this to improve the sum rate employing
different optimization methods, such as the Lagrangian duality theory \cite{Lei2016NOMA}, matching game theory\cite{Di2016sub}, and monotonic optimization \cite{Sun2016Fullduplex}.
Apart from the maximization of sum rate, resource allocation with fairness considerations for NOMA systems have also been addressed in the existing works.
In \cite{Liu2015b}, the authors defined a scheduling factor and proposed a user pairing and power allocation scheme to achieve proportional fairness in resource allocation.
Given a predefined user group, the authors in \cite{Timotheou2015} studied the power allocation problem from a fairness standpoint by maximizing the minimum achievable user rate with instantaneous CSIT and minimizing the maximum outage probability by exploiting average CSIT.
Yet, the works in \cite{Lei2016NOMA,Di2016sub,Sun2016Fullduplex,Liu2015b} focused on resource allocation design of NOMA based on the assumption of perfect CSIT.
Unfortunately, it is unlikely to acquire the perfect CSIT due to channel estimation error, feedback delay, and quantization error.
For the case of NOMA with imperfect CSIT, it is difficult for a base station to sort the users' channel gains and to determine the user scheduling strategy and SIC decoding policy.
More importantly, the imperfect CSIT may cause a resource allocation mismatch, which may degrade the system performance.
Therefore, it is interesting and more practical to design robust resource allocation strategy for MC-NOMA systems taking into account of CSIT imperfectness.}

\textcolor[rgb]{0.00,0.00,0.00}{
In the literature, there are three commonly adopted methods to address the CSIT imperfectness for resource allocation designs, including no-CSIT \cite{JorswieckNOCSI}, worst-case optimization \cite{WangWorstCase}, and stochastic approaches \cite{ZhangStochastic}.
The assumption of no-CSIT usually results in a trivial equal power allocation strategy without any preference in resource allocation \cite{JorswieckNOCSI}.
Besides, it is pessimistic to assume no-CSIT since some sorts of CSIT, e.g., imperfect channel estimates or statistical CSIT, can be easily obtained in practical systems exploiting handshaking signals.
The worst-case based methods guarantee the system performance for the maximal CSIT mismatch \cite{WangWorstCase}.
However, an exceedingly large amount of system resources are exploited for some worst cases that rarely happen.
In particular, for our considered problem, the worst-case method leads to a conservative resource allocation design, which may translate into a higher power consumption.
On the other hand, the stochastic methods aim at modeling the CSIT and/or the channel estimation error according to the long term statistic of the channel realizations \cite{ZhangStochastic}.
It is more meaningful than the no-CSIT method since the statistical CSIT is usually available based on the long term measurements in practical systems.
More importantly, the stochastic methods can guarantee the average system performance over the channel realizations with moderate system resources.
Therefore, in this paper, we employed the stochastic method to robustly design the resource allocation strategy for MC-NOMA systems under imperfect CSIT.}

\textcolor[rgb]{0.00,0.00,0.00}{Recently, green radio design has become an important focus in both academia and industry due to the growing demands of energy consumption and the arising environmental concerns around the world\cite{wu2016overview}.
Note that power-efficient resource allocation has been extensively studied in OMA systems\cite{DerrickLargeAntenna,DerrickFD2016}.
The authors in \cite{DerrickLargeAntenna,DerrickFD2016} studied the power-efficient resource allocation design for large number of base station antennas and full-duplex radio base stations, respectively, while the proposed algorithms are not applicable for our considered MC-NOMA systems.
To address the green radio design for NOMA systems, the authors in \cite{zhang2016energy} proposed an optimal power allocation strategy for a single-carrier NOMA system to maximize the energy efficiency, while a separate subcarrier assignment and power allocation scheme was proposed for MC-NOMA systems in \cite{FangEnergyEfficientNOMAJournal}.
To minimize the total power consumption, the authors in \cite{Lei2016NPM} designed a suboptimal ``relax-then-adjust" algorithm for MC-NOMA systems.
Nevertheless, the existing designs in \cite{zhang2016energy,FangEnergyEfficientNOMAJournal,Lei2016NPM} are based on the assumption of perfect CSIT which may not be applicable to MC-NOMA systems under imperfect CSIT.
In our previous work \cite{Wei2016NOMA}, under statistical CSIT, we proposed an optimal SIC decoding policy for a two-user MC-NOMA system\footnote{\textcolor[rgb]{0.00,0.00,0.00}{In a two-user MC-NOMA system, there are at most two users multiplexing on each subcarrier.\vspace*{-10mm}}} and a suboptimal power-efficient resource allocation scheme with an equal rate assignment.
Furthermore, the authors in \cite{CuiPowerEfficientNOMA} proposed an optimal SIC decoding policy for single-carrier NOMA systems with more than two users multiplexing under statistical CSIT.
However, these two preliminary works \cite{Wei2016NOMA,CuiPowerEfficientNOMA} did not consider the joint resource allocation design for MC-NOMA systems and directly applying the designs to the considered MC-NOMA systems may lead to unsatisfactory performance.
To the best of the authors' knowledge, joint design of power allocation, rate allocation, user scheduling, and SIC decoding policy for power-efficient MC-NOMA under imperfect CSIT has not been reported yet.}

Based on aforementioned observations, in this paper, we study the power-efficient resource allocation design for downlink MC-NOMA systems under imperfect CSIT, where each user imposes its own QoS requirement.
The joint design of power allocation, rate allocation, user scheduling, and SIC decoding policy is formulated as a non-convex optimization problem to minimize the total transmit power.
To facilitate the design of optimal SIC decoding order, we define the \emph{channel-to-noise ratio (CNR) outage threshold}, which includes the joint effect of channel conditions and QoS requirements of users.
Based on the optimal SIC decoding policy, we propose an optimal resource allocation algorithm via the branch-and-bound (B\&B) approach \cite{Konno2000,horst2013global,MARANASProofBB,Androulakis1995}, which serves as a performance benchmark for MC-NOMA systems.
Furthermore, to strike a balance between system performance and computational complexity, we propose a suboptimal iterative resource allocation algorithm based on difference of convex (D.C.) programming\cite{dinh2010local,VucicProofDC}, which has a polynomial time computational complexity and converges quickly to a close-to-optimal solution.
Our simulation results show that the proposed resource allocation schemes enable significant transmit power savings and are robust against channel uncertainty.


Notations used in this paper are as follows. Boldface capital and lower case letters are reserved for matrices and vectors, respectively, ${\left( \cdot \right)^T}$ denotes the transpose of a vector or matrix.
$\mathbb{C}^{M\times N}$ denotes the set of all $M\times N$ matrices with complex entries; $\mathbb{R}^{M\times N}$ denotes the set of all $M\times N$ matrices with real entries;
$\abs{\cdot}$ denotes the absolute value of a complex scalar; $\left\| \cdot \right\|_2$ denotes the $l_2$-norm of a vector; \textcolor[rgb]{0.00,0.00,0.00}{$\left\lceil \cdot \right\rceil $ denotes the smallest integer greater than or equal to a real scalar; }and
$\Pr \left\{  \cdot  \right\}$ denotes the probability of a random event.
The circularly symmetric complex Gaussian distribution with mean $\mu$ and variance $\sigma^2$ is denoted by ${\cal CN}(\mu,\sigma^2)$;
$\sim$ stands for ``distributed as"; $U[a,b]$ denotes the uniform distribution in the interval $[a, b]$; and
${\nabla _{\mathbf{x}}}f$ denotes the gradient of a function $f$ with respective to (w.r.t.) vector ${\mathbf{x}}$.
\vspace*{-4mm}
\section{System Model and Problem Formulation}
In this section, after introducing the adopted MC-NOMA system model under imperfect CSIT, we define the QoS requirement based on outage probability and formulate the power-efficient resource allocation design as a non-convex optimization problem.
\vspace*{-4mm}
\subsection{System Model}
A downlink MC-NOMA system\footnote{\textcolor[rgb]{0.00,0.00,0.00}{In this paper, we focus on the power domain NOMA \cite{WeiSurvey2016} for the considered downlink communication scenario. Although the code-domain NOMA, such as sparse code multiple access (SCMA) \cite{Nikopour2013,Wang2015}, may outperform power-domain NOMA, SCMA is more suitable for the uplink communication where the reception complexity for information decoding is more affordable for base stations.}} with one base station (BS) and $M$ downlink users is considered and shown in Figure \ref{NOMA_model}.
All transceivers are equipped with single-antennas and there are $N_\mathrm{F}$ orthogonal subcarriers serving the $M$ users.
\textcolor[rgb]{0.00,0.00,0.00}{An overloaded scenario\footnote{\textcolor[rgb]{0.00,0.00,0.00}{Note that the proposed scheme in this paper can also be applied to underloaded systems where the number of subcarrier $N_\mathrm{F}$ is larger than the number of users $M$, i.e., $N_\mathrm{F} > M$.
For the sake of presentation, we first focus on the overloaded scenario and then apply the proposed resource allocation algorithm to both the overloaded and underloaded systems in the simulations.
Then, our simulation results in Section VI demonstrate that the proposed scheme is more power-efficient than that of the OMA scheme for both overloaded and underloaded systems.}\vspace*{-10mm}} is considered in this paper, i.e., $N_\mathrm{F}\le M$.
In addition, we assume that each of the $N_\mathrm{F}$ subcarriers can be allocated to at most two users via NOMA to reduce the computational complexity and delay incurred at receivers due to SIC decoding\footnote{\textcolor[rgb]{0.00,0.00,0.00}{In this paper, we focus on the two-user MC-NOMA system since it is more practical and is more appealing in both industry\cite{Access2015} and academia \cite{Dingtobepublished,LiuSWIPT,ChenTwoUser,Sun2016Fullduplex}.
The generalization of the proposed algorithms to the case of serving multiple users on each subcarrier is left for future work.}\vspace*{-10mm}}.
As a result, we have an implicit condition $\left\lceil {\frac{M}{2}} \right\rceil \le N_\mathrm{F} \le M$ such that the system can serve at least $M$ users.}
An example of a downlink MC-NOMA system with two users multiplexed on subcarrier $i$, $i \in \left\{ {1, \ldots ,N_\mathrm{F}} \right\}$, is illustrated in Figure \ref{NOMA_model}.
A binary indicator variable $s_{i,m} \in \{0,1\}$, $i \in \left\{ {1, \ldots ,N_\mathrm{F}} \right\}$, and $m \in \left\{ {1, \ldots ,M} \right\}$, is introduced as the user scheduling variable, where it is one if subcarrier $i$ is assigned to user $m$, and is zero otherwise.
Thus, we have the following constraint for $s_{i,m}$:
\vspace*{-2mm}
\begin{equation}\label{UserSchedulingConstraint}
\sum\limits_{m = 1}^M {{s_{i,m}}}  \le 2,\;\;\forall i.
\vspace*{-2mm}
\end{equation}

\begin{figure}[t]
\centering\vspace*{-8mm}
\includegraphics[width=3.5in]{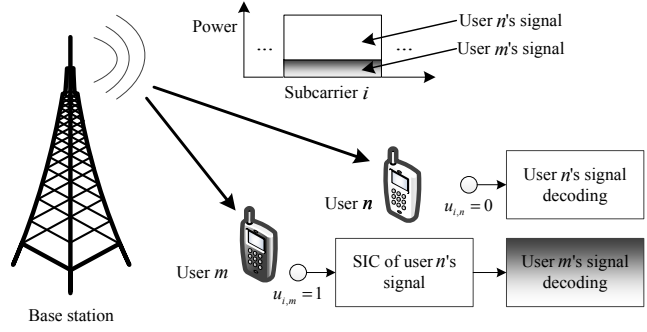}\vspace*{-7mm}
\caption{A downlink MC-NOMA system where user $m$ and user $n$ are multiplexed on subcarrier $i$. The base station transmits two superimposed signals with different powers. User $m$ is selected to perform SIC, i.e., $u_{i,m} = 1$, while user $n$ is not selected to, i.e., $u_{i,n} = 0$. User $m$ first decodes and removes the signal of user $n$ before decoding its desired signal, while user $n$ directly decodes its own signal with user $m$'s signal treated as noise.}\label{NOMA_model}\vspace*{-10mm}
\end{figure}

\noindent At the BS side, the transmitted signal on subcarrier $i$ is given by
\vspace*{-2mm}
\begin{equation}\label{Tx}
{x_i} = \sum\limits_{m = 1}^M {{s_{i,m}}\sqrt {{p_{i,m}}} {a_{i,m}}}, \;\; \forall i,
\vspace*{-2mm}
\end{equation}
where $a_{i,m}\in \mathbb{C}$ denotes the modulated symbol for user $m$ on subcarrier $i$ and $p_{i,m}$ is the allocated power for user $m$ on subcarrier $i$.
Different from most of the existing works on resource allocation of NOMA with perfect CSIT, e.g., \cite{sun2016optimal,Lei2015}, our model assumes imperfect CSIT. Under this condition, the BS needs to decide both the SIC decoding order, $u_{i,m} \in \left\{0,1\right\}$, and the rate allocation, $R_{i,m} > 0$, for each user on each subcarrier, which have critical impacts on the system power consumption.
The SIC decoding order variable is defined as follows:
\vspace*{-2mm}
\begin{equation}
u_{i,m} =
\left\{
\begin{array}{ll}
1 & \text{if}\;\text{user}\; m \;\text{on\;subcarrier}\; i \;\text{is selected to perform SIC,} \\
0 & \text{otherwise}.
\end{array}
\right.\label{SICVariable}\vspace*{-2mm}
\end{equation}

At the receiver side, the received signal at user $m$ on subcarrier $i$ is given by
\vspace*{-2mm}
\begin{equation}\label{Rx}
{y_{i,m}} = {h_{i,m}}\sum\limits_{n = 1}^M {{s_{i,n}}\sqrt {{p_{i,n}}} {a_{i,n}}}  + {z_{i,m}},
\vspace*{-2mm}
\end{equation}
where $z_{i,m} \in \mathbb{C}$ denotes the additive white Gaussian noise (AWGN) for user $m$ on subcarrier $i$ with a zero-mean and variance $\sigma^2_{i,m}$, i.e., $z_{i,m} \sim {\cal CN}(0,\sigma^2_{i,m})$.
\textcolor[rgb]{0.00,0.00,0.00}{Variable ${h}_{i,m} = \frac{{g}_{i,m}}{\sqrt{\text{PL}_m}} \in \mathbb{C}$ denotes the channel coefficient between the BS and user $m$ on subcarrier $i$ capturing the joint effect of path loss and small scale fading.
In particular, $\text{PL}_m$ denotes the path loss of user $m$, and we assume that the BS can accurately estimate the path loss of each user $\text{PL}_m$, $\forall m$, based on the long term measurements.
On the other hand, ${g}_{i,m} \sim {\cal CN}(0,1) $ denotes the small scale fading, which is modeled as Rayleigh fading in this paper \cite{Tse2005}.
Due to the channel estimation error and/or feedback delay, only imperfect CSIT is available for resource allocation.
To capture the channel estimation error, we model the channel coefficient for user $m$ on subcarrier $i$ as
\vspace*{-3.6mm}
\begin{equation}\label{ChannelModel}
h_{i,m} = \hat{h}_{i,m} + \Delta h_{i,m},
\vspace*{-3.6mm}
\end{equation}
where $\hat{h}_{i,m}$ denotes the estimated channel coefficient for user $m$ on subcarrier $i$, $\Delta h_{i,m} \sim {\cal CN}(0,\frac{\kappa^2_{i,m}}{\text{PL}_m})$ denotes the corresponding CSIT error, and $\frac{\kappa^2_{i,m}}{\text{PL}_m} >0 $ denotes the variance of the channel estimation error. We assume that the channel estimates $\hat{h}_{i,m}$ and the channel estimation error $\Delta h_{i,m}$ are uncorrelated.}
According to the SIC decoding order policy, both multiplexed users will choose to perform SIC or directly decode its own messages.


\vspace*{-4mm}
\subsection{QoS Requirements}
\vspace*{-1mm}
To facilitate our design, we define an outage probability on each subcarrier, which is commonly adopted in the literature for resource allocation design \cite{Zhu2009,Kwan_AF_2010}.
We assume that if the SIC of any user is failed, the user cannot decode its own messages, and thus an outage event occurs\cite{Ding2014}.
Therefore, if user $m$ on subcarrier $i$ is selected to perform SIC, i.e., $u_{i,m} = 1$, we have the outage probability as follows:
\vspace*{-1.5mm}
\begin{equation}\label{OutageProbabilityWithSIC}
\hspace*{-2mm}{\mathrm{P}}_{i,m}\hspace*{-1mm}=\hspace*{-1mm} {\mathrm{Pr}}\hspace*{-1mm}\left\{\hspace*{-1mm} {C_{i,m}^{{\mathrm{SIC}}} \hspace*{-1mm}<\hspace*{-5mm} \sum\limits_{n = 1,n \ne m}^M \hspace*{-5mm}{{s_{i,n}}R_{i,n}}} \hspace*{-1mm}\left| {{{\hat h}_{i,m}}}, u_{i,m} \hspace*{-1mm}=\hspace*{-1mm} 1 \right.\hspace*{-1mm}\right\}
\hspace*{-0.8mm}+\hspace*{-0.8mm} {\mathrm{ Pr}}\hspace*{-1mm}\left\{\hspace*{-1mm} {C_{i,m}^{{\mathrm{SIC}}} \hspace*{-1mm}\ge\hspace*{-5mm} \sum\limits_{n = 1,n \ne m}^M \hspace*{-5mm}{{s_{i,n}}R_{i,n}},{C_{i,m}^{(1)}} \hspace*{-1mm}<\hspace*{-1mm} {R_{i,m}}} \hspace*{-1mm}\left| {{{\hat h}_{i,m}}}, u_{i,m} \hspace*{-1mm}=\hspace*{-1mm} 1\right.\hspace*{-1mm}\right\}\hspace*{-1mm},
\vspace*{-1.5mm}
\end{equation}
where ${\mathrm{P}}_{i,m}$ denotes the outage probability of user $m$ on subcarrier $i$ due to the channel uncertainty and $R_{i,m}$ denotes the allocated rate of user $m$ on subcarrier $i$.
Variable $C_{i,m}^{{\mathrm{SIC}}}$ denotes the achievable rate of user $m$ for decoding the interference from the other user on subcarrier $i$ and $C_{i,m}^{(1)}$ denotes the achievable rate of user $m$ on subcarrier $i$ for decoding its own message with a successful SIC. \textcolor[rgb]{0.00,0.00,0.00}{In our model, with two users multiplexing on subcarrier $i$, $C_{i,m}^{{\mathrm{SIC}}}$ and $C_{i,m}^{(1)}$ are given by
\vspace*{-3mm}
\begin{equation}
C_{i,m}^{\mathrm{SIC}} = {\log _2}\left( {1 + \frac{{{{\left| {{h_{i,m}}} \right|}^2}\sum\limits_{n = 1,\;n \ne m}^M {{s_{i,n}}{p_{i,n}}} }}{{{s_{i,m}}{p_{i,m}}{{\left| {{h_{i,m}}} \right|}^2} + \sigma _{i,m}^2}}} \right)\; \text{and} \;
C_{i,m}^{(1)} = {\log _2}\left( {1 + \frac{{{s_{i,m}}{p_{i,m}}{{\left| {{h_{i,m}}} \right|}^2}}}{{\sigma _{i,m}^2}}} \right),\label{CapacityWithSIC}
\vspace*{-2.5mm}
\end{equation}
respectively. Note that there is only one non-zero entry in the summations in \eqref{OutageProbabilityWithSIC} and \eqref{CapacityWithSIC}, which means that the interference to be cancelled during the SIC processing arises from only one user due to the constraint in \eqref{UserSchedulingConstraint}.}

On the other hand, if user $m$ on subcarrier $i$ is not selected to perform SIC, i.e., $u_{i,m} = 0$, we have the outage probability as follows:
\vspace*{-4mm}
\begin{equation}\label{OutageProbabilityWithoutSIC}
{\mathrm{P}}_{i,m}' = {\rm{Pr}}\left\{ {C_{i,m}^{(2)} < {R_{i,m}}\left| {{{\hat h}_{i,m}}}, u_{i,m} = 0\right.} \right\},
\vspace*{-4mm}
\end{equation}
where ${\mathrm{P}}_{i,m}'$ denotes the outage probability of user $m$ on subcarrier $i$ due to the channel uncertainty. Variable ${C_{i,m}^{(2)}}$ denotes the achievable rate of user $m$ on subcarrier $i$ for decoding its own messages without performing SIC and it is given by
\vspace*{-3mm}
\begin{equation}\label{CapacityWithoutSIC}
{C_{i,m}^{(2)}} = {\log _2}\left( {1 + \frac{{{s_{i,m}}{p_{i,m}}{{\left| {{h_{i,m}}} \right|}^2}}}{{{{\left| {{h_{i,m}}} \right|}^2}\sum\limits_{n=1,\;n \ne m}^M {{s_{i,n}}{p_{i,n}}}  + \sigma _{i,m}^2}}} \right).
\vspace*{-2mm}
\end{equation}

Now, we define the QoS requirement of user $m$ on subcarrier $i$ as follows:
\vspace*{-4mm}
\begin{equation}\label{QoSConstraint}
{\mathrm{P}}_{i,m}^{{\mathrm{out}}} = s_{i,m} \left\{u_{i,m}{\mathrm{P}}_{i,m} + \left( {1 - {u_{i,m}}} \right){\mathrm{P}}_{i,m}' \right\} \le {\delta _{i,m}},\;\; \forall i,m,
\vspace*{-3mm}
\end{equation}
where $\delta_{i,m}$, $0 \le \delta_{i,m} \le 1$, denotes the required outage probability of user $m$ on subcarrier $i$.
When user $m$ is not assigned on subcarrier $i$, i.e., $s_{i,m} = 0$, this inequality is always satisfied.

In OMA systems, subcarrier $i$ will be allocated to user $m$ exclusively, i.e., $\sum\limits_{m = 1}^M {{s_{i,m}}}  = 1$.
In this case, we have ${C_{i,m}^{(1)}} = {C_{i,m}^{(2)}}$, and ${\mathrm{P}}_{i,m}'$  denotes the corresponding outage probability of user $m$ on subcarrier $i$.
Also, performing SIC at user $m$ is not required, and thus we have $u_{i,m}=0$,  ${\mathrm{P}}_{i,m}^{{\mathrm{out}}} =  {\mathrm{P}}_{i,m}'$.
In other words, the outage probability defined in the considered MC-NOMA system generalizes that of OMA systems as a subcase.

\vspace*{-4mm}
\subsection{Optimization Problem Formulation}
We aim to jointly design the power allocation, rate allocation, user scheduling, and SIC
decoding policy for minimizing the total transmit power of the considered downlink MC-NOMA system under imperfect CSIT.
The joint resource allocation design can be formulated as the following optimization problem:
\vspace*{-2mm}
\begin{align} \label{P1}
&\underset{\mathbf{s},\;\mathbf{u},\;\mathbf{p},\;\mathbf{r}}{\mino}\,\, \,\, \notag \sum\limits_{m = 1}^M {\sum\limits_{i = 1}^{N_\mathrm{F}} {s_{i,m}p_{i,m}}}\\[-2mm]
\notag\mbox{s.t.}\;\;
&\mbox{\textbf{C1}: } s_{i,m} \in \left\{0,\;1\right\},\;\forall i,m,
\;\;\;\;\;\;\;\;\;\mbox{\textbf{C2}: } u_{i,m} \in \left\{0,\;1\right\},\;\forall i,m,
\;\;\;\;\;\;\;\;\;\mbox{\textbf{C3}: } p_{i,m} \ge 0,\;\forall i,m,\;\;\notag\\[-3mm]
&\mbox{\textbf{C4}: } R_{i,m} \ge 0,\;\forall i,m,
\hspace*{18.9mm}\mbox{\textbf{C5}: } {\mathrm{P}}_{i,m}^{{\mathrm{out}}} \le {\delta _{i,m}},\; \forall i,m,\notag\\[-2mm]
&\mbox{\textbf{C6}: } \sum\limits_{i = 1}^{N_{\mathrm{F}}} {{s_{i,m}}{{R}_{i,m}}}  \ge R_m^{\mathrm{total}},\;\forall m,
\;\hspace*{-0.46mm}\mbox{\textbf{C7}: } \sum\limits_{m = 1}^M {{s_{i,m}}}  \le 2,\;\forall i,
\end{align}
\par
\vspace{-2mm}
\noindent where $\mathbf{s} \in \mathbb{R}^{N_{\mathrm{F}}M \times 1}$, $\mathbf{u} \in \mathbb{R}^{N_{\mathrm{F}} M\times1}$, $\mathbf{p} \in \mathbb{R}^{N_{\mathrm{F}}M \times1}$, and $\mathbf{r} \in \mathbb{R}^{N_{\mathrm{F}} M \times 1}$ denote the sets of optimization variables $s_{i,m}$, $u_{i,m}$, $p_{i,m}$, and $R_{i,m}$.
Constraints \textbf{C1} and \textbf{C2} restrict the user scheduling variables and SIC decoding order variables to be binary, respectively.
Constraints \textbf{C3} and \textbf{C4} ensure the non-negativity of the power allocation variables and the rate allocation variables, respectively.
Constraint \textbf{C5} is the QoS constraint of outage probability for user $m$ on subcarrier $i$. According to \eqref{QoSConstraint}, \textbf{C5} is inactive when $s_{i,m}=0$.
\textcolor[rgb]{0.00,0.00,0.00}{In \textbf{C6}, constant $R_m^{\mathrm{total}} > 0$ denotes the required minimum total data rate of user $m$.
In particular, constraint \textbf{C6} is introduced for rate allocation such that the required minimum total data rate of user $m$ can be guaranteed.
We note that the rate allocation $R_{i,m}$ for user $m$ on subcarrier $i$ may be very low, but the achievable rate of user $m$ is bounded below by $R_m^{\mathrm{total}}$.}
Constraint \textbf{C7} is imposed to ensure that at most two users are multiplexed on each subcarrier.
Note that our problem includes OMA as a subcase when $\sum\limits_{m = 1}^M {{s_{i,m}}} = 1$ in \textbf{C7}.

This problem in \eqref{P1} is a mixed combinatorial non-convex optimization problem.
In general, there is no systematic and computational efficient approach to solve \eqref{P1} optimally.
The combinatorial nature comes from the binary constraints \textbf{C1} and \textbf{C2} while the non-convexity arises in the QoS constraint in \textbf{C5}.
Besides, the coupling between binary variables and continuous variables in constraint \textbf{C5} yields an intractable problem.
However, we note that the power allocation variables and SIC decoding order variables are only involved in the QoS constraint \textbf{C5} among all the constraints.
Therefore, by exploiting this property, we attempt to simplify the optimization problem in \eqref{P1} to facilitate the resource allocation design in the next section.
\vspace*{-4mm}
\section{Problem Transformation}
In this section, we first propose the optimal SIC policy on each subcarrier taking into account the impact of imperfect CSIT.
Then, the minimum total transmit power per subcarrier is derived.
Subsequently, we transform the optimization problem which paves the way for the design of the optimal resource allocation.
\subsection{Optimal SIC Policy Per Subcarrier}

Under perfect CSIT, for a two-user NOMA downlink system, it is well-known that the optimal SIC decoding order is the descending order of channel gains for maximizing the total system sum rate \cite{Dingtobepublished}.
However, under imperfect CSIT, it is not possible to decide the SIC decoding order by comparing the actual channel gains between the multiplexed users.
To facilitate the design of resource allocation for the case of NOMA with imperfect CSIT, we define an \emph{channel-to-noise ratio (CNR) outage threshold} in the following, from which we can decide the optimal SIC decoding order to minimize the total transmit power.

\begin{Def}[CNR Outage Threshold]
For user $m$ on subcarrier $i$ with the estimated channel coefficient $\hat{h}_{i,m}$, the noise power ${{\sigma _{i,m}^2}}$, and the required outage probability ${\delta _{i,m}}$, a CNR outage event occurs when the CNR, $\frac{{{{\left| {{h_{i,m}}} \right|}^2}}}{{\sigma _{i,m}^2}}$, is smaller than a CNR outage threshold ${\beta _{i,m}}$. The CNR outage probability of user $m$ on subcarrier $i$ can be written as follows:
\vspace*{-2mm}
\begin{equation}\label{OutageThreshold1}
{\mathrm{Pr}}\left\{ {\frac{{{{\left| {{h_{i,m}}} \right|}^2}}}{{\sigma _{i,m}^2}} < {\beta _{i,m}}} \left| {{{\hat h}_{i,m}}} \right. \right\} = {\delta _{i,m}},\;\;\forall i,m.
\vspace*{-2mm}
\end{equation}
Therefore, the CNR outage threshold is given by
\vspace*{-1mm}
\begin{equation}\label{OutageThreshold2}
{\beta _{i,m}} = \frac{{F_{\left. {{{\left| {{h_{i,m}}} \right|}^2}} \right|{{\hat h}_{i,m}}}^{ - 1}\left( {{\delta _{i,m}}} \right)}}{{\sigma _{i,m}^2}},\;\;\forall i,m,
\vspace*{-2mm}
\end{equation}
where ${F_{\left. {{{\left| {{h_{i,m}}} \right|}^2}} \right|{{\hat h}_{i,m}}}}\left( x \right)$, $x \ge 0$, is the conditional cumulative distribution function\footnote{We note that if the fading channel is not Rayleigh distributed, the proposed schemes in this paper are still applicable whereas we only need to update ${F_{\left. {{{\left| {{h_{i,m}}} \right|}^2}} \right|{{\hat h}_{i,m}}}}\left( x \right)$ according to the distribution of small scale fading.} (CDF) of channel power gain of user $m$ on subcarrier $i$.
In fact, according to the channel model in \eqref{ChannelModel}, ${F^{-1}_{\left. {{{\left| {{h_{i,m}}} \right|}^2}} \right|{{\hat h}_{i,m}}}}\left( x \right)$ is the inverse of a noncentral chi-square CDF\footnote{The inverse function of a noncentral chi-square CDF can be computed efficiently by standard numerical solvers or implemented as a look-up table for implementation.\vspace*{-10mm}} with degrees of freedom of 2 and a noncentrality parameter of \vspace*{-1mm} $\frac{{\left| {{\hat h}_{i,m}} \right|^2}}{\kappa^2_{i,m}}{\text{PL}_m}$, ${\kappa^2_{i,m}}>0$. Note that if perfect CSIT is available, i.e., ${\kappa^2_{i,m}}=0$, there is no CNR outage caused by channel estimation error and we have $\beta_{i,m} = \frac{{{{\left| {{\hat h_{i,m}}} \right|}^2}}}{{\sigma _{i,m}^2}} = \frac{{{{\left| {{ h_{i,m}}} \right|}^2}}}{{\sigma _{i,m}^2}}$.
\end{Def}

We note that the CNR outage defined in \eqref{OutageThreshold1} is different from that of the outage events defined in \eqref{OutageProbabilityWithSIC} and \eqref{OutageProbabilityWithoutSIC}.
In particular, the former does not involve the required target data rate of users, while the latter outage event occurs when the achievable rate is smaller than a given target data rate.
In the rest of the paper, we refer to CNR outage as in \eqref{OutageThreshold1} if it is stated explicitly.
Otherwise, it refers to the outage defined as in \eqref{OutageProbabilityWithSIC} and \eqref{OutageProbabilityWithoutSIC}.
The CNR outage threshold defined in \eqref{OutageThreshold2} involves the estimated channel gain, the channel estimation error distribution, the noise power, and the required outage probability.
It captures the joint effect of channel conditions and QoS requirements in the statistical sense for imperfect CSIT scenarios.
Specifically, the user with higher CNR outage threshold may have better channel condition and/or lower required outage probability, and vice versa.
In fact, the CNR outage threshold serves as a criterion for the optimal SIC decoding order, which is summarized in the following theorem.

\begin{Thm}[Optimal SIC Decoding Order]\label{Theorem1}
Given user $m$ and user $n$ multiplexing on subcarrier $i$,
the optimal SIC decoding order is determined by the CNR outage thresholds as follows:
\vspace*{-6mm}
\begin{equation}
\left(u_{i,m},u_{i,n}\right) =
\left\{
\begin{array}{ll}
\left(1,0\right) & \text{if}\;\beta _{i,m} \ge \beta _{i,n}, \\
\left(0,1\right) & \text{if}\;\beta _{i,m} < \beta _{i,n}.
\end{array}
\right.\label{SICDecodingOrderNOMA}
\vspace*{-1.5mm}
\end{equation}
which means that the user with a higher CNR outage threshold will perform SIC decoding.
\end{Thm}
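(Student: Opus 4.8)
The plan is to reduce Theorem~\ref{Theorem1} to a per-subcarrier power‑minimization comparison between the two admissible SIC decoding orders and to show that the ``$\beta$‑ordered'' one never consumes more transmit power. Since the power variables $p_{i,m}$ and the SIC variables $u_{i,m}$ enter \eqref{P1} only through the QoS constraint \textbf{C5}, once the scheduling $\mathbf{s}$ and the rate allocation $\mathbf{r}$ are fixed the power minimization decouples across subcarriers. Hence it suffices to look at a single subcarrier $i$ carrying users $m$ and $n$ with given rates $R_{i,m},R_{i,n}>0$, and to minimize $p_{i,m}+p_{i,n}$ subject to \textbf{C5} for each assignment $(u_{i,m},u_{i,n})\in\{(1,0),(0,1)\}$.

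The crucial step is to convert the probabilistic constraint \textbf{C5} into deterministic inequalities in the powers by means of the CNR outage threshold. First I would observe that, for fixed powers, each of $C_{i,m}^{\mathrm{SIC}}$, $C_{i,m}^{(1)}$ in \eqref{CapacityWithSIC} and $C_{i,m}^{(2)}$ in \eqref{CapacityWithoutSIC} is strictly increasing in the channel‑to‑noise ratio $\abs{h_{i,m}}^2/\sigma_{i,m}^2$ of the corresponding user; therefore every outage event in \eqref{OutageProbabilityWithSIC} and \eqref{OutageProbabilityWithoutSIC} is equivalent to an event $\{\abs{h_{i,m}}^2/\sigma_{i,m}^2<\theta\}$ for an explicit, power‑dependent threshold $\theta$. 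For the SIC‑performing user the two ``failure'' events are nested CNR‑below‑threshold events separated by the ``SIC‑success'' event, so the compound probability $\mathrm{P}_{i,m}$ collapses to $\Pr\{\abs{h_{i,m}}^2/\sigma_{i,m}^2<\max(\theta_1,\theta_2)\}$, with $\theta_1$ the SIC threshold and $\theta_2$ the own‑message threshold. Invoking \eqref{OutageThreshold1}--\eqref{OutageThreshold2} and the strict monotonicity of the conditional CDF, \textbf{C5} for the SIC user becomes $\max(\theta_1,\theta_2)\le\beta_{i,m}$, and for the non‑SIC user $\theta_3\le\beta_{i,n}$. Writing $a=2^{R_{i,m}}-1$ and $b=2^{R_{i,n}}-1$, the order $(u_{i,m},u_{i,n})=(1,0)$ is thus feasible iff $p_{i,m}\ge a/\beta_{i,m}$ and $p_{i,n}\ge b/\min(\beta_{i,m},\beta_{i,n})+b\,p_{i,m}$, and the symmetric statement holds for $(0,1)$.

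Next I would solve the two per‑subcarrier problems in closed form. Since the lower bound on the non‑SIC user's power is increasing in the SIC user's power, the minimum total power is attained by pushing the SIC user's power down to its own lower bound, which pins the minimizer uniquely. For $\beta_{i,m}\ge\beta_{i,n}$ this gives minimum total powers $T_1=a(1+b)/\beta_{i,m}+b/\beta_{i,n}$ for order $(1,0)$ and $T_2=\bigl(a(1+b)+b\bigr)/\beta_{i,n}$ for order $(0,1)$, so that
\begin{equation}
T_2-T_1 = a(1+b)\left(\frac{1}{\beta_{i,n}}-\frac{1}{\beta_{i,m}}\right)\ge 0 ,
\end{equation}
with equality only if $\beta_{i,m}=\beta_{i,n}$ or $R_{i,m}=0$. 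Hence whenever $\beta_{i,m}\ge\beta_{i,n}$ the order $(1,0)$ is (weakly) power‑optimal on subcarrier $i$; the symmetric computation covers $\beta_{i,m}<\beta_{i,n}$. A standard exchange argument applied to any optimal solution of \eqref{P1} (replace the decoding order on a violating subcarrier and re‑optimize its two powers) then shows that the rule \eqref{SICDecodingOrderNOMA} may be assumed to hold on every subcarrier, which proves the theorem.

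I expect the main obstacle to be the second paragraph: rigorously establishing that the randomized QoS constraint \textbf{C5} is \emph{exactly} equivalent to the stated deterministic power inequalities — in particular that the union of the SIC‑failure and own‑message‑failure events for the SIC user reduces to a single threshold event, and handling the degenerate regimes (e.g.\ $p_{i,n}\le b\,p_{i,m}$, a vanishing rate, or $\delta_{i,m}$ at the boundary) where a threshold becomes infinite and the corresponding order is infeasible. Once this equivalence and the monotonicity facts are in place, the optimization and the final comparison are elementary.
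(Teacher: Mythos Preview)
Your approach is essentially the paper's: reduce to a per-subcarrier power minimization, translate the outage constraint \textbf{C5} into deterministic CNR-threshold inequalities via the monotonicity of $C_{i,m}^{\mathrm{SIC}},C_{i,m}^{(1)},C_{i,m}^{(2)}$ in $\abs{h_{i,m}}^2/\sigma_{i,m}^2$, solve the resulting linear power program in closed form, and compare. Your formulas for $T_1$ and $T_2$ match exactly the paper's $p^{\mathrm{total}}_{\mathrm{I}}$ and $p^{\mathrm{total}}_{\mathrm{II}}$, and the difference $T_2-T_1$ is the same comparison.

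There is, however, a genuine gap. You restrict attention to the ``two admissible SIC decoding orders'' $(u_{i,m},u_{i,n})\in\{(1,0),(0,1)\}$, but in the formulation \eqref{P1} the variables $u_{i,m}$ and $u_{i,n}$ are independent binary decisions, so $(1,1)$ (both users attempt SIC) and $(0,0)$ (neither performs SIC; each treats the other as noise) are also feasible and must be ruled out. The paper's Appendix~A handles all four cases: it derives the corresponding minimum total powers $p^{\mathrm{total}}_{\mathrm{III}}$ and $p^{\mathrm{total}}_{\mathrm{IV}}$, notes the extra feasibility prerequisite $0<1-\gamma_{i,m}\gamma_{i,n}<1$ for those two cases, and then checks $p^{\mathrm{total}}_{\mathrm{III}}>p^{\mathrm{total}}_{\mathrm{I}}$ and $p^{\mathrm{total}}_{\mathrm{IV}}>p^{\mathrm{total}}_{\mathrm{I}}$. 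Without this, your argument establishes only that $(1,0)$ beats $(0,1)$ when $\beta_{i,m}\ge\beta_{i,n}$, not that it is optimal over the full feasible set of $\mathbf{u}$ in \eqref{P1}. The fix is short --- the same threshold-conversion machinery you already set up yields the Case~III/IV expressions directly --- but it is not optional.
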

  \emph{Proof: }
Please refer to Appendix A for a proof of Theorem \ref{Theorem1}.\qed


\textcolor[rgb]{0.00,0.00,0.00}{With only a single user allocated on subcarrier $i$, i.e., $\sum\limits_{n = 1}^M {{s_{i,n}}} = 1$, it reduces to the OMA scenario and there is no need of SIC decoding for all the users, and thus we have
\vspace*{-2.5mm}
\begin{equation}\label{SICDecodingOrderOMA1}
u_{i,m} = 0, \;\forall m,\;\;\text{if}\;\sum\limits_{n = 1}^M {{s_{i,n}}} = 1.
\vspace*{-1mm}
\end{equation}
Note that the optimal SIC decoding policies defined in \eqref{SICDecodingOrderNOMA} and \eqref{SICDecodingOrderOMA1} are conditioned on any given feasible user scheduling strategy satisfying constraints \textbf{C1}, \textbf{C6}, and \textbf{C7} in (11) for minimizing the total transmit power.}
More importantly, given any point in the feasible solution set spanned by \textbf{C1}, \textbf{C4}, \textbf{C6}, and \textbf{C7} in \eqref{P1}, our proposed optimal SIC decoding order always consumes the minimum total transmit power to satisfy the QoS constraint \textbf{C5}.
\textcolor[rgb]{0.00,0.00,0.00}{By exploiting constraint \textbf{C5}, the underlying relationship between SIC decoding order variables $u_{i,m}$ and user scheduling variables $s_{i,m}$ is revealed for NOMA and OMA scenarios in \eqref{SICDecodingOrderNOMA} and \eqref{SICDecodingOrderOMA1}, respectively.}

\begin{Remark}\label{Remark1}
In Theorem \ref{Theorem1}, it is noteworthy that the optimal SIC decoding order only depends on the CNR outage threshold, $\beta _{i,m}$, and is independent of the target data rates of users.
This observation is reasonable. Let us first recall from the basic principle for SIC decoding for the case of NOMA with perfect CSIT\cite{Tse2005} and then extend it to the case of imperfect CSIT.
Specifically, for perfect CSIT, the strong user (with a higher channel gain) can decode the messages of the weak user (with a lower channel gain), if we can guarantee that the weak user can decode its own messages, no matter who has a higher target data rate.
This is due to the fact that the achievable rate for the strong user to decode the messages of the weak user is always higher than that of the weak user to decode its own.
On the other hand, if the weak user performs SIC, due to its worse channel condition, a higher transmit power is required such that the strong user's messages are decodable at the weak user, no matter whose target data rate is higher.
Similarly, given user $m$ and user $n$ multiplexed on subcarrier $i$ under imperfect CSIT, according to \eqref{OutageProbabilityWithSIC}, \eqref{OutageProbabilityWithoutSIC}, and \eqref{OutageThreshold1}, we have the following implication under the condition of $\beta _{i,m} \hspace*{-1mm} \ge\hspace*{-1mm}  \beta _{i,n}$:
\vspace*{-2.5mm}
\begin{equation}
{\rm{Pr}}\left\{\hspace*{-0.5mm} {C_{i,n}^{(2)} \hspace*{-1mm}<\hspace*{-1mm} {R_{i,n}}\hspace*{-1mm}\left| {{{\hat h}_{i,n}}}, u_{i,n} \hspace*{-1mm}=\hspace*{-1mm} 0\right.} \hspace*{-1mm}\right\} \hspace*{-1mm}\le\hspace*{-1mm} {\delta _{i,n}} \Rightarrow
{\mathrm{Pr}}\left\{\hspace*{-0.5mm} C_{i,m}^{{\mathrm{SIC}}} \hspace*{-1mm}<\hspace*{-1mm} {R_{i,n}} \hspace*{-1mm}\left| {{{\hat h}_{i,m}}}, u_{i,m} \hspace*{-1mm}=\hspace*{-1mm} 1 \right.\hspace*{-1mm}\right\} \hspace*{-1mm}\le\hspace*{-1mm} {\delta _{i,m}}, \; \text{if}\;\beta _{i,m} \hspace*{-1mm}\ge\hspace*{-1mm} \beta _{i,n},
\vspace*{-2mm}
\end{equation}
which means that the SIC process at the user with a higher CNR outage threshold will always satisfy its QoS constraint if the other user's (with a lower CNR outage threshold) decoding process satisfy its own QoS constraint, no matter whose target data rate is higher.
Also, if the user with a lower CNR outage threshold performs SIC, it requires an extra power to guarantee the QoS constraint of the SIC process, no matter who requires a higher data rate.
Therefore, the optimal SIC decoding order is determined according to the CNR outage thresholds, and it is independent of the target data rates.
\end{Remark}
\vspace*{-4mm}
\subsection{Minimum Total Transmit Power Per Subcarrier}
In this section, we exploit Theorem \ref{Theorem1} to further simplify the problem in \eqref{P1} via expressing the power allocation variables $p_{i,m}$ in terms of CNR outage threshold and target data rate.
Given user $m$ and user $n$ multiplexed on subcarrier $i$, i.e., $s_{i,m}=s_{i,n}=1$, according to the proof of Theorem \ref{Theorem1} in Appendix A, the minimum total transmit power required on subcarrier $i$ to satisfy the QoS constraint \textbf{C5} in \eqref{P1} can be generally represented as:
\vspace*{-3mm}
\begin{equation}\label{TotalPowerConsumptionNOMA}
p^{\mathrm{total}}_{(i,m,n)} = \frac{{{\gamma _{i,m}}}}{{{\beta _{i,m}}}} + \frac{{{\gamma _{i,n}}}}{{{\beta _{i,n}}}} + \frac{{{\gamma _{i,m}}{\gamma _{i,n}}}}{{\max \left( {{\beta _{i,m}},{\beta _{i,n}}} \right)}},
\vspace*{-2mm}
\end{equation}
where the subscript $({i,m,n})$ denotes that users $m$ and $n$ are multiplexed on subcarrier $i$, $\gamma _{i,m}$ denotes the required signal-to-interference-plus-noise ratio (SINR) to support the target data rate of user $m$ on subcarrier $i$, and it is given by ${\gamma _{i,m}} = {2^{{R_{i,m}}}} - 1$. Particularly, the power allocations for users $m$ and $n$ are given according to their CNR outage threshold as follows:
\vspace*{-1mm}
\begin{equation}\label{PowerAllocation4}
\left(p_{i,m},\;p_{i,n}\right) =
\left\{
\begin{array}{ll}
\left(\frac{{{\gamma _{i,m}}}}{{{\beta _{i,m}}}},\; \frac{{{\gamma _{i,n}}}}{{{\beta _{i,n}}}} + \frac{{{\gamma _{i,n}}{\gamma _{i,m}}}}{\beta _{i,m}}\right) & \text{if}\; \beta _{i,m} \ge \beta _{i,n}, \\
\left(\frac{{{\gamma _{i,m}}}}{{{\beta _{i,m}}}} + \frac{{{\gamma _{i,m}}{\gamma _{i,n}}}}{\beta _{i,n}},\; \frac{{{\gamma _{i,n}}}}{{{\beta _{i,n}}}}\right) & \text{if}\; \beta _{i,m} < \beta _{i,n},
\end{array}
\right.
\vspace*{-2mm}
\end{equation}

We note that, in the OMA scenario, if $\sum\limits_{n = 1}^M {{s_{i,n}}} = 1$ and $s_{i,m}=1$, the required minimum transmit power to satisfy the QoS constraint \textbf{C5} in \eqref{P1} is given by
\vspace*{-3mm}
\begin{equation}\label{TotalPowerConsumptionOMA}
p^{\mathrm{total}}_{(i,m)} =p_{i,m}= \frac{{{\gamma _{i,m}}}}{{{\beta _{i,m}}}},
\vspace*{-2mm}
\end{equation}
where the subscript $({i,m})$ denotes that users $m$ is assigned on subcarrier $i$ exclusively.
By combining the user scheduling variables of subcarrier $i$, $s_{i,m},\;\forall m \in \left\{ {1, \ldots ,M} \right\}$, with \eqref{TotalPowerConsumptionNOMA}, the total transmit power on subcarrier $i$ can be generally expressed as:
\vspace*{-2mm}
\textcolor[rgb]{0.00,0.00,0.00}{\begin{equation}\label{TotalPowerConsumptionUniform}
p^{\mathrm{total}}_{i} = \sum\limits_{m = 1}^M {\frac{{{s_{i,m}}{\gamma _{i,m}}}}{{{\beta _{i,m}}}}}  + \sum\limits_{m = 1}^{M-1} {\sum\limits_{n = m + 1}^M {\frac{{{s_{i,m}}{\gamma _{i,m}}{s_{i,n}}{\gamma _{i,n}}}}{{\max \left( {{\beta _{i,m}},{\beta _{i,n}}} \right)}}} },
\end{equation}}
\par
\vspace*{-7mm}
\noindent
where \eqref{TotalPowerConsumptionUniform} subsumes the cases of the power consumption in the OMA scenario in \eqref{TotalPowerConsumptionOMA}.
In fact, equation \eqref{TotalPowerConsumptionUniform} unveils the relationship between the required minimum total transmit power on subcarrier $i$, the user scheduling variables, and the allocated data rates.
Since the SIC decoding process on each subcarrier is independent with each other, we have the total transmit power of all the subcarriers as follows:
\vspace*{-2.1mm}
\begin{equation}\label{TotalPowerConsumptionUniform2}
p^{\mathrm{total}} = \sum\limits_{i = 1}^{{N_{\mathrm{F}}}}{p^{\mathrm{total}}_{i}} = \sum\limits_{i = 1}^{{N_{\mathrm{F}}}} {\sum\limits_{m = 1}^M {\frac{{{s_{i,m}}{\gamma _{i,m}}}}{{{\beta _{i,m}}}}} }  + \sum\limits_{i = 1}^{{N_{\mathrm{F}}}} {\sum\limits_{m = 1}^{M-1} {\sum\limits_{n = m + 1}^M {\frac{{{s_{i,m}}{\gamma _{i,m}}{s_{i,n}}{\gamma _{i,n}}}}{{\max \left( {{\beta _{i,m}},{\beta _{i,n}}} \right)}}} } }.
\vspace*{-2mm}
\end{equation}

Recall that Theorem \ref{Theorem1} is derived from the QoS constraint \textbf{C5} in \eqref{P1}.
Therefore, given any user scheduling and rate allocation strategy in the feasible solution set spanned by constraints \textbf{C1}, \textbf{C4}, \textbf{C6}, and \textbf{C7}, we obtain the optimal SIC decoding order and power allocation solution from \eqref{SICDecodingOrderNOMA}, \eqref{SICDecodingOrderOMA1}, \eqref{PowerAllocation4}, and \eqref{TotalPowerConsumptionOMA}, which can satisfy the QoS constraint \textbf{C5} with the minimum power consumption in \eqref{TotalPowerConsumptionUniform2}.
In other words, the problem in \eqref{P1} can be transformed equivalently to a simpler one to minimize the power consumption in \eqref{TotalPowerConsumptionUniform2} w.r.t. the user scheduling and rate allocation variables.


\begin{Remark}\label{Remark1}
Comparing the minimum total transmit power for NOMA and OMA in \eqref{TotalPowerConsumptionNOMA} and \eqref{TotalPowerConsumptionOMA}, respectively, we obtain $p^{\mathrm{total}}_{(i,m,n)} > p^{\mathrm{total}}_{(i,m)} + p^{\mathrm{total}}_{(i,n)}$.
At the first sight, it seems that OMA is more power-efficient than NOMA as the third term in \eqref{TotalPowerConsumptionNOMA} is the extra power cost for users' multiplexing.
However, this comparison is unfair for NOMA since both schemes require different spectral efficiencies.
To keep the same spectral efficiency in the considered scenarios, the required target data rate for OMA users in \eqref{TotalPowerConsumptionOMA} should be doubled\cite{Ding2015b,sun2016optimal}.
\textcolor[rgb]{0.00,0.00,0.00}{Due to the combinatorial nature of user scheduling, it is difficult to prove that the proposed MC-NOMA scheme is always more power-efficient than the OMA scheme with the optimal resource allocation strategy.
In fact, for a special case with $M = 2N_{\mathrm{F}}$, this conclusion can be proved mathematically based on our previous work in \cite{Wei2016NOMA}.
For the general case, we rely on the simulation results to demonstrate the power savings of our proposed schemes compared to the OMA scheme.}
\end{Remark}
\vspace*{-4mm}
\subsection{Problem Transformation}
Based on Theorem \ref{Theorem1}, the problem in \eqref{P1} is equivalent to the following optimization problem:
\vspace*{-8.5mm}
\begin{align}\label{P2}
&\underset{\mathbf{s},\;\boldsymbol{\gamma}}{\mino}\,\, \,\, \notag \sum\limits_{i = 1}^{{N_{\mathrm{F}}}} {\sum\limits_{m = 1}^M {\frac{{{s_{i,m}}{\gamma _{i,m}}}}{{{\beta _{i,m}}}}} }  + \sum\limits_{i = 1}^{{N_{\mathrm{F}}}} {\sum\limits_{m = 1}^{M-1} {\sum\limits_{n = m + 1}^M {\frac{{{s_{i,m}}{\gamma _{i,m}}{s_{i,n}}{\gamma _{i,n}}}}{{\max \left( {{\beta _{i,m}},{\beta _{i,n}}} \right)}}} } } \\[-4mm]
\mbox{s.t.}\;\;
&\mbox{\textbf{C1}},\;\mbox{\textbf{C7}},\notag\\[-5mm]
&\mbox{\textbf{C4}: } \gamma_{i,m} \ge 0,\;\;\forall i,m,
\;\;\mbox{\textbf{C6}: } \sum\limits_{i = 1}^{{N_{\mathrm{F}}}} {s_{i,m}}{{{\log }_2}\left( {1 +{\gamma _{i,m}}} \right)}  \ge {{R}_m^{\mathrm{total}}},\;\;\forall m,
\end{align}
\par
\vspace{-2mm}
\noindent where the rate allocation variables ${{R}_{i,m}}$ are replaced by their equivalent optimization variables $\gamma_{i,m}$ in \textbf{C4} and \textbf{C6}, and $\boldsymbol{\gamma}  \in \mathbb{R}^{N_{\mathrm{F}} M \times 1} $ denotes the set of $\gamma_{i,m}$.

\textcolor[rgb]{0.00,0.00,0.00}{The reformulated problem in \eqref{P2} is simpler than that of the problem in \eqref{P1}, since the number of optimization variables is reduced and the QoS constraint \textbf{C5} is safely removed. However, \eqref{P2} is also difficult to solve.}
In particular, \textbf{C1} are binary constraints, and there are couplings between the binary variables and continuous variables in both the objective function and constraint \textbf{C6}.
In fact, the problem in \eqref{P2} is a non-convex mixed-integer nonlinear programming problem (MINLP), which is NP-hard\cite{floudas1995nonlinear} in general. Now, we transform the problem from \eqref{P2} to:
\vspace*{-2mm}
\begin{align}\label{P3}
&\underset{\mathbf{{s}},\;\boldsymbol{{\gamma}}}{\mino}\,\, \,\, \notag \sum\limits_{i = 1}^{{N_{\mathrm{F}}}} {\sum\limits_{m = 1}^M {\frac{{{{\gamma}}_{i,m}}}{{{\beta _{i,m}}}}} }  + \sum\limits_{i = 1}^{{N_{\mathrm{F}}}} {\sum\limits_{m = 1}^{M-1} {\sum\limits_{n = m + 1}^M {\frac{{{{\gamma}}_{i,m}{{\gamma}}_{i,n}}}{{\max \left( {{\beta _{i,m}},{\beta _{i,n}}} \right)}}} } } \\[-3mm]
\notag\mbox{s.t.}\;\;
&\mbox{\textbf{C1}, \textbf{C4}, \textbf{C7},} \notag\\[-3mm]
&\widetilde{\text{\textbf{C6}}}\mbox{: } \sum\limits_{i = 1}^{{N_{\mathrm{F}}}} {{{\log }_2}\left( {1 +{{\gamma} _{i,m}}} \right)}  \ge {{R}_m^{\mathrm{total}}},\;\;\forall m,
\;\;\mbox{\textbf{C8}: } {{\gamma}}_{i,m} = {{s}_{i,m}} {\gamma}_{i,m},\;\;\forall i,m.
\end{align}
\par
\vspace*{-2mm}
\noindent
Constraint \textbf{C8} is imposed to preserve the original couplings between the binary variables and the continuous variables. Constraint $\widetilde{\text{\textbf{C6}}}$ is obtained from constraint \textbf{C6} in \eqref{P2} by employing the equality ${s_{i,m}}{{{\log }_2}\left( {1 +{\gamma _{i,m}}} \right)} = {{{\log }_2}\left( {1 +{s_{i,m}}{{\gamma} _{i,m}}} \right)}$ for ${s}_{i,m} \in \left\{0,\;1\right\}$. Clearly, the problem in \eqref{P3} is equivalent to the problem in \eqref{P2}, when we substitute constraint \textbf{C8} into the objective function and constraint $\widetilde{\text{\textbf{C6}}}$. Therefore, in the sequel, we focus on solving \eqref{P3}.

\vspace*{-4mm}
\section{Optimal Solution}
In this section, to facilitate the design of the optimal resource allocation algorithm, we first relax the binary constraint \textbf{C1} and augment the coupling constraint \textbf{C8} into objective function by introducing a penalty factor.
Then, an optimal resource allocation algorithm is proposed based on B\&B approach\cite{Konno2000,horst2013global,MARANASProofBB}.
\vspace*{-4mm}
\subsection{Continuous Relaxation and Penalty Method}
To start with, we relax the binary constraint on $s_{i,m}$ in \textbf{C1} which yields:
\vspace*{-2mm}
\begin{align}\label{P3Continuous} &\underset{\mathbf{\overline{s}},\;\boldsymbol{{\gamma}}}{\mino}\,\, \,\, \notag \sum\limits_{i = 1}^{{N_{\mathrm{F}}}} {\sum\limits_{m = 1}^M {\frac{{{{\gamma}}_{i,m}}}{{{\beta _{i,m}}}}} }  + \sum\limits_{i = 1}^{{N_{\mathrm{F}}}} {\sum\limits_{m = 1}^{M-1} {\sum\limits_{n = m + 1}^M {\frac{{{{\gamma}}_{i,m}{{\gamma}}_{i,n}}}{{\max \left( {{\beta _{i,m}},{\beta _{i,n}}} \right)}}} } } \\[-2mm]
\notag\mbox{s.t.}\;\;
& \mbox{\textbf{C4}},\;\widetilde{\text{\textbf{C6}}},\notag\\[-4mm]
&\overline{\text{\textbf{C1}}} \mbox{: } 0 \le \overline{s}_{i,m} \le 1,\;\;\forall i,m,
\;\;\overline{\text{\textbf{C7}}} \mbox{: } \sum\limits_{m = 1}^M {{\overline{s}_{i,m}}} \le 2,\;\;\forall i,
\;\;\overline{\text{\textbf{C8}}} \mbox{: } {{\gamma}}_{i,m} = {\overline{s}_{i,m}} {\gamma}_{i,m},\;\;\forall i,m,
\end{align}
\par
\vspace*{-2mm}
\noindent
where ${\overline{s}_{i,m}}$ denotes the continuous relaxation of the binary variable ${s}_{i,m}$ and $\mathbf{\overline{s}} \in \mathbb{R}^{N_{\mathrm{F}} M \times 1} $ denotes the set of ${\overline{s}_{i,m}}$. $\overline{\text{\textbf{C1}}}$, $\overline{\text{\textbf{C7}}}$, and $\overline{\text{\textbf{C8}}}$ denote the modified constraints for \textbf{C1}, \textbf{C7}, and \textbf{C8} via replacing ${s}_{i,m}$ with ${\overline{s}_{i,m}}$, correspondingly.
In general, the solution of the constraint relaxed problem in \eqref{P3Continuous} provides a lower bound for the problem in \eqref{P3}.
However, by utilizing the coupling relationship in constraint $\overline{\text{\textbf{C8}}}$, the following theorem states the equivalence between \eqref{P3} and \eqref{P3Continuous}.
\begin{Thm}\label{Theorem2}
The relaxed problem in \eqref{P3Continuous} is equivalent to the problem in \eqref{P3}.
More importantly, for the optimal solution of \eqref{P3Continuous}, $\left({{\overline{s}_{i,m}^*}},{\gamma}_{i,m}^*\right)$, $i \in \left\{1, \ldots ,{N_{\rm{F}}}\right\}$, $m \in \left\{ 1, \ldots ,M \right\}$, the optimal solution of \eqref{P3} can be recovered via keeping ${\gamma}_{i,m}^*$ and performing the following mapping:
\vspace*{-2mm}
\begin{equation}\label{OptimalConvert2}
{{{s}_{i,m}^*}} =
\left\{
\begin{array}{ll}
1 & \text{if}\;{\gamma}_{i,m}^* > 0,\\
0 & \text{if}\;{\gamma}_{i,m}^* = 0.
\end{array}
\right.
\vspace*{-1.5mm}
\end{equation}

\end{Thm}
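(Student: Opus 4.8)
The plan is to establish the equivalence in two directions and then verify that the rounding map \eqref{OptimalConvert2} is well-defined. First I would observe that \eqref{P3} is a restriction of \eqref{P3Continuous}: every feasible point of \eqref{P3} satisfies $\overline{s}_{i,m}=s_{i,m}\in\{0,1\}\subseteq[0,1]$, so $\overline{\textbf{C1}}$, $\overline{\textbf{C7}}$, $\overline{\textbf{C8}}$ all hold, and the objectives coincide. Hence the optimal value of \eqref{P3Continuous} is no larger than that of \eqref{P3}. The real content is the reverse inequality, which I would obtain by showing that any feasible point $\bigl(\overline{\mathbf{s}},\boldsymbol{\gamma}\bigr)$ of \eqref{P3Continuous} can be converted — without increasing the objective and without changing $\boldsymbol{\gamma}$ — into a feasible point of \eqref{P3}.

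The key step is to exploit constraint $\overline{\textbf{C8}}$: $\gamma_{i,m}=\overline{s}_{i,m}\gamma_{i,m}$. If $\gamma_{i,m}>0$ this forces $\overline{s}_{i,m}=1$; if $\gamma_{i,m}=0$ then $\overline{s}_{i,m}$ is unconstrained by $\overline{\textbf{C8}}$ and appears nowhere else in the objective or in $\widetilde{\textbf{C6}}$ (because those only involve the products $\overline{s}_{i,m}\gamma_{i,m}=\gamma_{i,m}$, equivalently $\gamma_{i,m}$ directly). Therefore, defining $s_{i,m}^\star$ by the rule in \eqref{OptimalConvert2} — namely $s_{i,m}^\star=1$ when $\gamma_{i,m}>0$ and $s_{i,m}^\star=0$ when $\gamma_{i,m}=0$ — I would check: $\textbf{C1}$ holds by construction; $\textbf{C8}$ holds since $s_{i,m}^\star\gamma_{i,m}=\gamma_{i,m}$ in both cases; $\widetilde{\textbf{C6}}$ and the objective are unchanged because they depend on $\overline{\mathbf{s}}$ only through the products that equal $\gamma_{i,m}$; and $\textbf{C7}$ holds because $s_{i,m}^\star=1\Rightarrow\overline{s}_{i,m}=1$ (by $\overline{\textbf{C8}}$ with $\gamma_{i,m}>0$), so $\sum_m s_{i,m}^\star\le\sum_m\overline{s}_{i,m}\le 2$. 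Thus the converted point is feasible for \eqref{P3} with the same objective value, giving the reverse inequality and hence equality of the optimal values.

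Applying this argument to a global optimizer $\bigl(\overline{\mathbf{s}}^{*},\boldsymbol{\gamma}^{*}\bigr)$ of \eqref{P3Continuous} yields a feasible point of \eqref{P3} with the same (optimal) objective value, which is therefore a global optimizer of \eqref{P3}; this is precisely the recovery map \eqref{OptimalConvert2}, establishing the second claim of the theorem. The only subtlety — and the step I would be most careful about — is the case $\gamma_{i,m}^{*}=0$, where $\overline{s}_{i,m}^{*}$ may be strictly fractional in the relaxed solution; the point to stress is that such a fractional value is cost-free and constraint-neutral, so overwriting it with $0$ breaks nothing, and conversely any $\gamma_{i,m}^{*}>0$ already pins $\overline{s}_{i,m}^{*}=1$, so no rounding is actually needed there. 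I would also note in passing that constraint $\textbf{C7}$ (at most two multiplexed users) is inherited rather than newly imposed, so the construction never violates the NOMA structure. A short remark would reconcile this with the usual situation where relaxation only gives a lower bound: here the tight coupling $\overline{\textbf{C8}}$ makes the relaxation exact.
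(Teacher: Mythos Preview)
Your proposal is correct and follows essentially the same approach as the paper: both arguments hinge on the observation that constraint $\overline{\textbf{C8}}$ forces $\overline{s}_{i,m}=1$ whenever $\gamma_{i,m}>0$, while for $\gamma_{i,m}=0$ the scheduling variable can be set to $0$ without affecting the objective, $\widetilde{\textbf{C6}}$, or $\overline{\textbf{C7}}$, so the rounded point $(s_{i,m}^{*},\gamma_{i,m}^{*})$ is feasible for \eqref{P3} with unchanged cost. Your version is slightly more explicit in framing the two-direction inequality and in verifying each constraint, but the substance is identical to the paper's proof in Appendix~B.
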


\begin{proof}
Please refer to Appendix B for a proof of Theorem \ref{Theorem2}.
\end{proof}

Note that the equivalence between $\widetilde{\text{\textbf{C6}}}$ in \eqref{P3Continuous} and \textbf{C6} in \eqref{P2} still holds at the optimal solution via the mapping relationship in \eqref{OptimalConvert2}.
It is notable that the reformulation in \eqref{P3} not only transforms the couplings between binary variables and continuous variables into a single constraint \textbf{C8}, but also reveals the special structure that enables the equivalence between \eqref{P3} and \eqref{P3Continuous}.
Now, the non-convexity remaining in \eqref{P3Continuous} arises from the product term in both objective function and constraint $\overline{\text{\textbf{C8}}}$.
Thus, we augment $\overline{\text{\textbf{C8}}}$ into the objective function via introducing a penalty factor $\theta$ as follows:
\vspace*{-4mm}
\begin{equation} \label{P3Penalty}
\underset{\mathbf{\overline{s}},\;\boldsymbol{{\gamma}}}{\mino}\,\, \,\, G^{\theta}\left(\mathbf{\overline{s}},{\boldsymbol{{\gamma} }} \right) \;\;\;
\mbox{s.t.}\;\;
\overline{\text{\textbf{C1}}},\; \text{\textbf{C4}},\; \widetilde{\text{\textbf{C6}}},\; \overline{\text{\textbf{C7}}},
\vspace*{-2mm}
\end{equation}
where the new objective function is given by
\vspace*{-1mm}
\begin{equation}\label{NewObj}
G^{\theta}\left(\mathbf{\overline{s}},{\boldsymbol{{\gamma} }} \right) =
\sum\limits_{i = 1}^{{N_{\mathrm{F}}}} {\sum\limits_{m = 1}^M {\frac{{{{\gamma}}_{i,m}}}{{{\beta _{i,m}}}}} }  + \sum\limits_{i = 1}^{{N_{\mathrm{F}}}} {\sum\limits_{m = 1}^{M-1} {\sum\limits_{n = m + 1}^M {\frac{{{{\gamma}}_{i,m}{{\gamma}}_{i,n}}}{{\max \left( {{\beta _{i,m}},{\beta _{i,n}}} \right)}}} } } +\theta \sum\limits_{i = 1}^{{N_{\mathrm{F}}}} {\sum\limits_{m = 1}^M {\left({{\gamma}}_{i,m} - {\overline{s}_{i,m}} {\gamma}_{i,m}\right)} }.\vspace*{-1mm}
\end{equation}

\begin{Thm}\label{Theorem3}
If the problem in \eqref{P3Continuous} is feasible with a bounded optimal value, the problem in \eqref{P3Penalty} is equivalent to \eqref{P3Continuous} for a sufficient large penalty factor $\theta \gg 1$.
\end{Thm}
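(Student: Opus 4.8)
The plan is to apply the standard theory of exact penalty functions for problems with a single equality-type coupling constraint, adapted to the fact that $\overline{\text{\textbf{C8}}}$ is the set of equalities $\gamma_{i,m} = \overline{s}_{i,m}\gamma_{i,m}$, i.e.\ $\gamma_{i,m}(1-\overline{s}_{i,m}) = 0$. Because $\gamma_{i,m}\ge 0$ (by \textbf{C4}) and $1-\overline{s}_{i,m}\ge 0$ (by $\overline{\text{\textbf{C1}}}$), each term $\gamma_{i,m}-\overline{s}_{i,m}\gamma_{i,m}$ is \emph{nonnegative} on the feasible set of \eqref{P3Penalty}. Hence the penalty term $\theta\sum_{i,m}(\gamma_{i,m}-\overline{s}_{i,m}\gamma_{i,m})$ is a nonnegative quantity that vanishes exactly when $\overline{\text{\textbf{C8}}}$ holds. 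This sign structure is the key simplification — it lets us avoid absolute values and makes $G^{\theta}$ monotonically nondecreasing in $\theta$ at every fixed feasible point.

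The first step is to show that $\mathrm{val}\eqref{P3Penalty}$ is nondecreasing in $\theta$ and bounded above by $\mathrm{val}\eqref{P3Continuous}$: any point feasible for \eqref{P3Continuous} satisfies $\overline{\text{\textbf{C8}}}$, so its penalty term is zero and $G^{\theta}$ reduces to the original objective; taking the optimizer of \eqref{P3Continuous} gives $\mathrm{val}\eqref{P3Penalty}\le \mathrm{val}\eqref{P3Continuous}$ for every $\theta$. The second step is the converse inequality for large $\theta$. I would argue by contradiction: suppose there is a sequence $\theta_k\to\infty$ and minimizers $(\overline{\mathbf{s}}^{(k)},\boldsymbol{\gamma}^{(k)})$ of \eqref{P3Penalty} with $\sum_{i,m}(\gamma^{(k)}_{i,m}-\overline{s}^{(k)}_{i,m}\gamma^{(k)}_{i,m})=:\varepsilon_k>0$ bounded away from $0$ along a subsequence. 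Then $G^{\theta_k}\ge \theta_k\varepsilon_k\to\infty$, contradicting the uniform upper bound $\mathrm{val}\eqref{P3Continuous}$ from step one (here the hypothesis that \eqref{P3Continuous} is feasible with bounded optimal value is used). Therefore $\varepsilon_k\to 0$. One then needs a compactness/closedness argument: the feasible set of \eqref{P3Penalty} restricted to a sublevel set of $G^{\theta}$ is bounded (the $\gamma_{i,m}$ are squeezed between $0$ and a bound coming from the objective value, and $\overline{s}_{i,m}\in[0,1]$), so we can extract a convergent subsequence whose limit is feasible for \eqref{P3Continuous} (the limiting penalty term is $0$) and attains objective value $\le\lim_k G^{\theta_k} \le \mathrm{val}\eqref{P3Continuous}$; hence equality of optimal values, and moreover the limit point is an optimal solution of \eqref{P3Continuous}. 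Finally I would note that for $\theta$ large enough the argument actually yields that \emph{every} optimal solution of \eqref{P3Penalty} satisfies $\overline{\text{\textbf{C8}}}$ exactly and is therefore optimal for \eqref{P3Continuous}, and conversely — establishing the claimed equivalence.

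The main obstacle I anticipate is not the penalty mechanics but pinning down the exact sense of ``equivalence'' and making the boundedness/attainment step rigorous: one must verify that minimizers of \eqref{P3Penalty} exist (continuity of $G^{\theta}$ plus coercivity of the objective in $\boldsymbol{\gamma}$ on the feasible polytope in $\overline{\mathbf{s}}$), and that the sublevel sets are compact so the subsequence limit is legitimate. A cleaner alternative, which I would mention as a remark, is to invoke a known exact-penalty result (e.g.\ \cite{dinh2010local,VucicProofDC} or a standard reference such as Bertsekas) stating that for a continuous objective, a compact feasible set, and a continuous penalty that is zero exactly on the constraint set, there exists a finite $\theta^\star$ beyond which the penalized and original problems share the same optimal value and solution set; the only thing to check is that the penalty term $\sum_{i,m}(\gamma_{i,m}-\overline{s}_{i,m}\gamma_{i,m})$ is continuous and vanishes precisely on $\{\overline{\text{\textbf{C8}}}\}$ within the feasible region of \eqref{P3Penalty}, which is immediate from the nonnegativity observation above. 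Either route gives the result; I would likely present the direct contradiction argument for self-containedness and defer the general statement to a citation.
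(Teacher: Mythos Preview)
Your proposal is substantially more detailed than what the paper actually does: the paper's proof of Theorem~\ref{Theorem3} consists entirely of a citation to \cite{DerrickFD2016,sun2016optimal} with no self-contained argument. So there is no in-paper proof to compare against beyond noting that you have sketched the exact-penalization mechanics while the paper defers them. Your key structural observation --- that $\gamma_{i,m}(1-\overline{s}_{i,m})\ge 0$ on the feasible set of \eqref{P3Penalty}, so the penalty is nonnegative and vanishes precisely on $\overline{\text{\textbf{C8}}}$ --- is exactly the right starting point, and your compactness justification (sublevel sets of $G^{\theta}$ bound each $\gamma_{i,m}$ via the linear term $\gamma_{i,m}/\beta_{i,m}$, while $\overline{s}_{i,m}\in[0,1]$) is sound.

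The one genuine gap in your sketch is the jump from asymptotic to exact penalty. Your contradiction argument shows that along $\theta_k\to\infty$ the penalty residual $\varepsilon_k\to 0$, and via compactness that optimal values coincide in the limit; but this by itself does not yield a \emph{finite} $\theta^\star$ beyond which every minimizer of \eqref{P3Penalty} satisfies $\overline{\text{\textbf{C8}}}$ exactly, which is what ``equivalent for sufficiently large $\theta$'' asserts. The sentence ``for $\theta$ large enough the argument actually yields that every optimal solution of \eqref{P3Penalty} satisfies $\overline{\text{\textbf{C8}}}$ exactly'' is stated but not derived from the preceding steps --- exact penalty at finite $\theta^\star$ typically requires a multiplier bound or a calmness/error-bound condition, which you do not verify. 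Your fallback of invoking a known exact-penalty theorem after checking the penalty is continuous, nonnegative, and vanishes exactly on $\overline{\text{\textbf{C8}}}$ is the safer route, and is in fact precisely what the paper does by citation. If you want the self-contained version to stand on its own, either supply the missing error-bound step or downgrade the conclusion to the asymptotic statement.
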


\begin{proof}
Please refer to \cite{DerrickFD2016,sun2016optimal} for a proof of Theorem \ref{Theorem3}.
\end{proof}

The problem in \eqref{P3Penalty} is a generalized linear multiplicative programming problem over a compact convex set, where the optimal solution can be obtained via the B\&B method\cite{Konno2000}.

\vspace*{-4mm}
\subsection{B\&B Based Optimal Resource Allocation Algorithm}
The B\&B method has been widely adopted as a partial enumeration strategy for global optimization\cite{horst2013global}.
The basic principle of B\&B relies on a successive subdivision of the original region (Branch) that systematically discards non-promising subregions via employing lower bound or upper bound (Bound).
It has been proved that B\&B can converge to a globally optimal solution in finite numbers of iterations if the branching operation is consistent and the selection operation is bound improving\footnote{We note that the B\&B method cannot be directly used on the problem in \eqref{P2} since a tight convex bounding function for the objective function has not been reported in the literatures and its feasible solution set is not compact. Based on Theorem \ref{Theorem2} and Theorem \ref{Theorem3}, we transform the problem in \eqref{P2} to a equivalent generalized linear multiplicative programming problem on a convex compact feasible solution set in \eqref{P3Penalty}, which can be handled by the B\&B method\cite{Konno2000}.\vspace*{-10mm}}\cite{horst2013global,MARANASProofBB}.
In this section, we first propose the branching rule and the bounding method for the problem in \eqref{P3Penalty}, and then develop the optimal resource allocation algorithm.

\subsubsection{Branching Procedure}
From constraint $\widetilde{\text{\textbf{C6}}}$ in \eqref{P3Penalty}, it can be observed that ${{\gamma}}_{i,m} > 2^{{R}_m^{\mathrm{total}}}-1$ is not the optimal rate allocation since the objective function is monotonically increasing with ${{\gamma}}_{i,m}$.
Therefore, we rewrite constraint \textbf{C4} in \eqref{P3Penalty} with a box constraint, \textbf{C4}: $0 \le {\gamma}_{i,m} \le 2^{{R}_m^{\mathrm{total}}}-1$, $\forall i,m$.
As a result, the optimization variables ${\mathbf{\overline{s}}}$ and $\boldsymbol{{\gamma}}$ are defined in a hyper-rectangle, which is spanned by $\overline{\text{\textbf{C1}}}$ and \textbf{C4}.
For notational simplicity, we redefine the optimization variables in \eqref{P3Penalty} as follows:
\vspace*{-6mm}
\begin{equation}
{v_{i,m}} = {{\gamma} _{i,m}}\; \text{and} \; {v_{i,m+M}} = {\overline{s}_{i,m}}, \;\;\forall{i} \in \left\{1, \ldots ,{N_{\mathrm{F}}}\right\},\;\forall{m} \in \left\{ 1, \ldots ,M \right\}. \label{VariableRedefine1}
\vspace*{-4mm}
\end{equation}
Then the product terms in $G^{\theta}\left(\mathbf{\overline{s}},{\boldsymbol{{\gamma} }} \right)$ in \eqref{P3Penalty}, i.e., ${{{\gamma}}_{i,m}{{\gamma}}_{i,n}}$ and $-{\overline{s}_{i,m}} {\gamma}_{i,m}$, can be generally represented by $a_{i,m,n}{{v_{i,m}v_{i,n}}}$, where $a_{i,m,n} \in \{1,-1\}$ is a constant coefficient.
With the definition in \eqref{VariableRedefine1}, we can use the new variable ${v_{i,m}}$ and the original variable ${\left({{\gamma}}_{i,m},\;{\overline{s}_{i,m}}\right)}$ interchangeably in the rest of the paper.
Furthermore, the hyper-rectangle spanned by constraints $\overline{\text{\textbf{C1}}}$ and \textbf{C4} can be presented by $\Phi = \left[ v _{i,m}^{{\mathrm{L}}},v _{i,m}^{{\mathrm{U}}} \right]$, $\forall{i} \in \left\{1, \ldots ,{N_{\mathrm{F}}}\right\}$, $\forall{m} \in \left\{ 1, \ldots ,2M \right\}$, where ${v _{i,m}^{\mathrm{L}}}$ and ${v_{i,m}^{\mathrm{U}}}$ denote the lower bound and upper bound for ${{v _{i,m}}}$, respectively.

\begin{figure}[t]
\centering\vspace*{-8mm}
\includegraphics[width=3.2in]{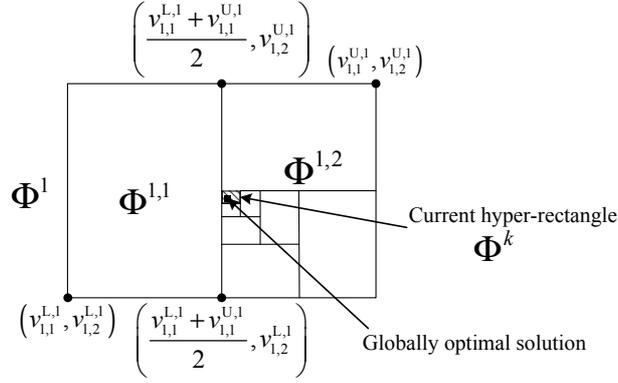}\vspace*{-8mm}
\caption{An illustration of the successive branching procedure in a two-dimensional space.}\vspace*{-9mm}
\label{BnB:a}
\end{figure}

An successive branching procedure with bisection on the longest edge of the hyper-rectangle is adopted in this paper\cite{Konno2000}, as illustrated in Figure \ref{BnB:a}.
Particularly, in the first iteration, according to constraints $\overline{\text{\textbf{C1}}}$ and \textbf{C4} in \eqref{P3Penalty}, the initial hyper-rectangle $\Phi^1$ is characterized by
\vspace*{-4mm}
\begin{equation}\label{InitialRectangle}
v _{i,m}^{{\mathrm{L}},1} = 0,\;
v _{i,m}^{{\mathrm{U}},1} = 2^{{R}_m^{\mathrm{total}}}-1,\;
v _{i,m+M}^{{\mathrm{L}},1} = 0,\;
\text{and}\;
v _{i,m+M}^{{\mathrm{U}},1} = 1.
\vspace*{-3mm}
\end{equation}
Then, in the $k$-th iteration, the current hyper-rectangle\footnote{The current hyper-rectangle selection rule will be presented in the overall algorithm, cf. \textbf{Algorithm} \ref{alg1}.\vspace*{-10mm}} $\Phi^k = \left[ v _{i,m}^{{\mathrm{L}},k},v _{i,m}^{{\mathrm{U}},k} \right]$, $\forall{i} \in \left\{1, \ldots ,{N_{\mathrm{F}}}\right\}$, $\forall{m} \in \left\{ 1, \ldots ,2M \right\}$, is partitioned into the following two subrectangles:
\vspace*{-2mm}
\begin{equation}\label{Subdivision}
\Phi^{k,1} = \left[ {\begin{array}{*{20}{c}}
{v_{1,1}^{{\mathrm{L}},k}}&{v_{1,1}^{{\mathrm{U}},k}}\vspace*{-1mm}\\
 \vdots & \vdots \vspace*{-1mm}\\
{v_{{i^k},{m^k}}^{{\mathrm{L}},k}}&{\frac{{v_{{i^k},{m^k}}^{{\mathrm{L}},k} + v_{{i^k},{m^k}}^{{\mathrm{U}},k}}}{2}}\vspace*{-1mm}\\
 \vdots & \vdots \vspace*{-1mm}\\
{v_{{N_{\mathrm{F}}},2M}^{{\mathrm{L}},k}}&{v_{{N_{\mathrm{F}}},2M}^{{\mathrm{U}},k}}
\end{array}} \right]\; \text{and} \;
\Phi^{k,2} = \left[ {\begin{array}{*{20}{c}}
{v_{1,1}^{{\mathrm{L}},k}}&{v_{1,1}^{{\mathrm{U}},k}}\vspace*{-1mm}\\
 \vdots & \vdots \vspace*{-1mm}\\
{\frac{{v_{{i^k},{m^k}}^{{\mathrm{L}},k} + v_{{i^k},{m^k}}^{{\mathrm{U}},k}}}{2}}&{v_{{i^k},{m^k}}^{{\mathrm{U}},k}}\vspace*{-1mm}\\
 \vdots & \vdots \vspace*{-1mm}\\
{v_{{N_{\mathrm{F}}},2M}^{{\mathrm{L}},k}}&{v_{{N_{\mathrm{F}}},2M}^{{\mathrm{U}},k}}
\end{array}} \right],
\vspace*{-2mm}
\end{equation}
where ${v _{i,m}^{{\mathrm{L}},k}}$ and ${v_{i,m}^{{\mathrm{U}},k}}$ denote the lower bound and upper bound for ${{v _{i,m}}}$ in the $k$-th iteration. Index $\left({{i^k},{m^k}}\right)$ in \eqref{Subdivision} corresponds to the variable with the longest normalized edge in $\Phi^k$, i.e., $\left({{i^k},{m^k}}\right) = \arg \underset{\left(i,m\right)}{\maxo}\; \frac{{ {v_{i,m}^{{\rm{U}},k} - v_{i,m}^{{\rm{L}},k}}}}{{ {v_{i,m}^{{\rm{U}},1} - v_{i,m}^{{\rm{L}},1}} }}$.
Figure \ref{BnB:a} illustrates the successive branching procedure in a two-dimensional space with ${N_{\mathrm{F}}}=1$ and $M=1$.
Firstly, we partitioned the initial hyper-rectangle $\Phi^1$ into $\Phi^{1,1}$ and $\Phi^{1,2}$ via perform a bisection on the edge of $v_{1,1}$.
Then, $\Phi^{1,2}$ is selected as current hyper-rectangle in the 2-th iteration, namely $\Phi^2$, for subsequent branching iterations.
The shadowed region denotes the current hyper-rectangle in the $k$-th iteration, i.e., $\Phi^k$.
Note that the branching procedure is exhaustive due to the finite numbers of optimization variables and the finite volume of initial hyper-rectangle $\Phi^1$, i.e., $\underset{k \to \infty }{\lim}\underset{\left(i,m\right)}{\maxo}\; \left( v _{i,m}^{{\mathrm{U}},k}-v _{i,m}^{{\mathrm{L}},k}\right) = 0.$ Correspondingly, in Figure \ref{BnB:a}, the shadowed region will collapse into a point with $k \to \infty$.

Now, in the $k$-th iteration, we can rewrite the problem in \eqref{P3Penalty} within $\Phi^k$ as follows:
\vspace*{-4mm}
\begin{equation}\label{P3PenaltySubrect} \underset{\left(\mathbf{\overline{s}},\;{\boldsymbol{{\gamma} }} \right) \in \Phi^k}{\mino}\,\, \,\, G^{\theta}\left(\mathbf{\overline{s}},{\boldsymbol{{\gamma} }} \right)\;\;\;
\mbox{s.t.} \;\; \widetilde{\text{\textbf{C6}}},\;\overline{\text{\textbf{C7}}}.
\vspace*{-2mm}
\end{equation}

\subsubsection{Lower Bound and Upper Bound}
We first present the lower bound for the problem in \eqref{P3PenaltySubrect}, from which the upper bound can be obtained straightforwardly in the end of this part. As it was shown in \cite{Androulakis1995}, the tightest possible convex lower bound of a product term $a_{i,m,n}{{v_{i,m}v_{i,n}}}$ inside some rectangular region $D_{i,m,n}^k = \left[ v _{i,m}^{{\mathrm{L}},k},v _{i,m}^{{\mathrm{U}},k} \right] \times \left[ v _{i,n}^{{\mathrm{L}},k},v _{i,n}^{{\mathrm{U}},k} \right]$, i.e., convex envelope, in the $k$-th iteration is given by
\vspace*{-4mm}
\begin{align}\label{ConvexEnvelope}
&{l_{a_{i,m,n}}^k}\left( {{v _{i,m}},{v _{i,n}}} \right) = \notag \\[-2mm]
&\left\{
\begin{array}{ll}
\hspace*{-2mm}a_{i,m,n} \max\left( {v _{i,m}^{{\mathrm{L}},k}{v _{i,n}} \hspace*{-1mm}+\hspace*{-1mm} v _{i,n}^{{\mathrm{L}},k}{v _{i,m}} \hspace*{-1mm}-\hspace*{-1mm} v _{i,m}^{{\mathrm{L}},k}v _{i,n}^{{\mathrm{L}},k},\;x_{i,m}^{{\mathrm{U}},k}{x _{i,n}} \hspace*{-1mm}+\hspace*{-1mm} v _{i,n}^{{\mathrm{U}},k}{v _{i,m}} \hspace*{-1mm}-\hspace*{-1mm} v _{i,m}^{{\mathrm{U}},k} v _{i,n}^{{\mathrm{U}},k}} \right) & \text{if}\;{a_{i,m,n}} > 0,\\
\hspace*{-2mm}a_{i,m,n} \min\left( {v _{i,m}^{{\mathrm{L}},k}{v _{i,n}} \hspace*{-1mm}+\hspace*{-1mm} v _{i,n}^{{\mathrm{U}},k}{v _{i,m}} \hspace*{-1mm}-\hspace*{-1mm} v _{i,m}^{{\mathrm{L}},k}v _{i,n}^{{\mathrm{U}},k},\;v_{i,m}^{{\mathrm{U}},k}{v _{i,n}} \hspace*{-1mm}+\hspace*{-1mm} v _{i,n}^{{\mathrm{L}},k}{v _{i,m}} \hspace*{-1mm}-\hspace*{-1mm} v _{i,m}^{{\mathrm{U}},k} v _{i,n}^{{\mathrm{L}},k}} \right) & \text{if}\;{a_{i,m,n}} \le 0.
\end{array}
\right.
\end{align}
\par
\vspace*{-2mm}
\noindent
Note that ${l_{a_{i,m,n}}^k}\left( {{v _{i,m}},{v _{i,n}}} \right)$ is a pointwise linear function, which serves as a convex lower bound for $a_{i,m,n}{{v_{i,m}v_{i,n}}}$ in $D_{i,m,n}^k$, i.e., ${l_{a_{i,m,n}}^k}\left( {{v _{i,m}},{v _{i,n}}} \right) \le a_{i,m,n}{{v_{i,m}v_{i,n}}}$.
Further, the maximum separation between $a_{i,m,n}{{v_{i,m}v_{i,n}}}$ and ${l_{a_{i,m,n}}^k}\left( {{v _{i,m}},{v _{i,n}}} \right)$ is equal to one-fourth of the area of rectangular region $D_{i,m,n}^k$ \cite{Androulakis1995}, which is
\vspace*{-2mm}
\begin{equation}\label{MaxSeparation1}
{{\varepsilon}^k} \left( v_{i,m}, v _{i,n} \right) = \frac{1}{4}{\left( {v _{i,m}^{{\mathrm{U}},k} - v _{i,m}^{{\mathrm{L}},k}} \right)\left( {v _{i,n}^{{\mathrm{U}},k} - v _{i,n}^{{\mathrm{L}},k}} \right)}.
\vspace*{-2mm}
\end{equation}
We note that the maximum separation is bounded and ${l_{a_{i,m,n}}^k}\left( {{v _{i,m}},{v _{i,n}}} \right)$ can be arbitrarily close to $a_{i,m,n}{{v_{i,m}v_{i,n}}}$ for a small enough rectangle $D_{i,m,n}^k$.

Based on the convex envelope for product terms in \eqref{P3PenaltySubrect}, in the $k$-th iteration, we have the relaxed convex minimization problem over current hyper-rectangle $\Phi^k$ as follows:
\vspace*{-3.5mm}
\begin{equation}\label{P3PenaltySubrectConvexLowerBound} \underset{\left(\mathbf{\overline{s}},\;\boldsymbol{{\gamma}}\right) \in \Phi^k }{\mino}\;\underline{G}^{\theta}_k\left(\mathbf{\overline{s}},{\boldsymbol{{\gamma} }} \right)\;\;\;\mbox{s.t.} \;\;\widetilde{\text{\textbf{C6}}},\;\overline{\text{\textbf{C7}}},
\vspace*{-3mm}
\end{equation}
where $\underline{G}^{\theta}_k\left(\mathbf{\overline{s}},{\boldsymbol{{\gamma} }} \right)$ denotes the convex lower bounding function for $G^{\theta}\left(\mathbf{\overline{s}},{\boldsymbol{{\gamma} }} \right)$ in \eqref{P3PenaltySubrect} within $\Phi^k$,
\vspace*{-2mm}
\begin{equation}
\underline{G}^{\theta}_k\hspace*{-1mm}\left(\mathbf{\overline{s}},{\boldsymbol{{\gamma} }} \right) \hspace*{-1mm}=\hspace*{-1mm} \sum\limits_{i = 1}^{{N_{\mathrm{F}}}} \hspace*{-1mm}{\sum\limits_{m = 1}^M \hspace*{-1mm}{{{\gamma} _{i,m}}\hspace*{-1mm}\left( {\frac{1}{{{\beta _{i,m}}}} \hspace*{-1mm}+\hspace*{-1mm} \theta } \right)} }  + \sum\limits_{i = 1}^{{N_{\mathrm{F}}}} \hspace*{-1mm}{\sum\limits_{m = 1}^{M - 1} \hspace*{-1mm}{\sum\limits_{n = m + 1}^M \hspace*{-1mm}{\frac{{{l_1^k}\left( {{{\gamma} _{i,m}},{{\gamma} _{i,n}}} \right)}}{{\max \left( {{\beta _{i,m}},{\beta _{i,n}}} \right)}}} } }  + \theta \hspace*{-1mm} \sum\limits_{i = 1}^{{N_{\mathrm{F}}}} \hspace*{-1mm}{\sum\limits_{m = 1}^M {{l_{ - 1}^k}\left( {{{\overline{s}}_{i,m}}{\rm{,}}{{\gamma} _{i,m}}} \right)} }.
\vspace*{-1.5mm}
\end{equation}
According to \eqref{MaxSeparation1}, the maximum gap between $G^{\theta}\left(\mathbf{\overline{s}},{\boldsymbol{{\gamma} }} \right)$ and $\underline{G}^{\theta}_k\left(\mathbf{\overline{s}},{\boldsymbol{{\gamma} }} \right)$ within $\Phi^k$ is given by
\vspace*{-1mm}
\begin{equation}\label{MaxGap}
{\Delta _{\max }^k} = \sum\limits_{i = 1}^{{N_{\mathrm{F}}}} {\sum\limits_{m = 1}^{M - 1} {\sum\limits_{n = m + 1}^M {\frac{{{\varepsilon^k}\left( {{\gamma} _{i,m},{\gamma} _{i,n}} \right)}}{{\max \left( {{\beta _{i,m}},{\beta _{i,n}}} \right)}}} } }  + \theta \sum\limits_{i = 1}^{{N_{\mathrm{F}}}} {\sum\limits_{m = 1}^M {{\varepsilon^k}\left( { \overline{s}_{i,m},{\gamma} _{i,m}} \right)} },
\vspace*{-1mm}
\end{equation}
which will vanish when $\Phi^k$ collapses into a point with $k \to \infty$.
Now, the relaxed problem in \eqref{P3PenaltySubrectConvexLowerBound} is a convex programming problem that can be solved efficiently by standard convex program solvers such as CVX \cite{cvx}.
Note that the optimal value of \eqref{P3PenaltySubrectConvexLowerBound} provides a lower bound for \eqref{P3PenaltySubrect} within $\Phi^k$ locally and that \eqref{P3PenaltySubrect} is infeasible if \eqref{P3PenaltySubrectConvexLowerBound} is infeasible.

For the upper bound, it is clear that any feasible solution of the problem in \eqref{P3PenaltySubrect} attains a local upper bound within the hyper-rectangle $\Phi^k$. It can be observed that the optimal solution of the problem in \eqref{P3PenaltySubrectConvexLowerBound}, denoted as $\left(\mathbf{\overline{s}}^*_k, {\boldsymbol{{\gamma} }^*_k}\right)$, is always a feasible point for \eqref{P3PenaltySubrect}, since they share the same feasible solution set. Therefore, an upper bound of \eqref{P3PenaltySubrect} within $\Phi^k$ can be obtained by simply calculating $G^{\theta}\left(\mathbf{\overline{s}}^*_k,{\boldsymbol{{\gamma} }^*_k} \right)$.

\begin{table}[t]
\vspace*{-10mm}
\begin{algorithm} [H]                    
\caption{Optimal Resource Allocation Algorithm via B\&B}     
\label{alg1}                             
\begin{algorithmic} [1]
\footnotesize          
\STATE \textbf{Initialization:}
  Set the convergence tolerance $\epsilon$, the iteration counter $k=1$, and initialize current subrectangle\footnotemark{} $\Phi^k$ through \eqref{InitialRectangle}. Solve the problem in \eqref{P3PenaltySubrectConvexLowerBound} within $\Phi^k$ to obtain the intermediate optimal solution $\left(\mathbf{\overline{s}}_k,{\boldsymbol{{\gamma} }_k}\right)$. Then, define the current point\footnotemark{} $\left(\mathbf{\overline{s}}_k^{\mathrm{cur}},{\boldsymbol{{\gamma} }_k^{\mathrm{cur}}}\right) = \left(\mathbf{\overline{s}}_k,{\boldsymbol{{\gamma} }_k}\right)$, the incumbent point\footnotemark{} $\left(\mathbf{\overline{s}}_k^{\mathrm{inc}},{\boldsymbol{{\gamma} }_k^{\mathrm{inc}}}\right) = \left(\mathbf{\overline{s}}_k,{\boldsymbol{{\gamma} }_k}\right)$, the local lower bound $L_k = \underline{G}^{\theta}_k\left(\mathbf{\overline{s}}_k,{\boldsymbol{{\gamma} }_k} \right)$, and the local upper bound $U_k = G^{\theta}\left(\mathbf{\overline{s}}_k,{\boldsymbol{{\gamma} }_k} \right)$. Initialize the global lower bound and upper bound with $\mathrm{LBD}_{k} = L_k$ and $\mathrm{UBD}_{k} = U_k$, respectively. Initialize the unfathomed partition set with $\mathcal{Z} = \left\{ \Phi^k  \right\}$. Correspondingly, define the local lower bound set and the local upper bound set for all unfathomed rectangles in $\mathcal{Z}$ with $\mathcal{W}$ and $\mathcal{V}$, respectively. Initialize them with $\mathcal{W} = \left\{L_k\right\}$ and $\mathcal{V} = \left\{U_k\right\}$, respectively.

\STATE \textbf{Branching on Current Rectangle:}
  Partition current rectangle $\Phi^k$ into two subrectangles $\Phi^{k,1}$ and $\Phi^{k,2}$ through \eqref{Subdivision} and update the unfathomed partition set with $\mathcal{Z} = \mathcal{Z} \bigcup \left\{\Phi^{k,1},\Phi^{k,2}\right\} \setminus \Phi^k$.

\STATE \textbf{Local Lower Bound and Upper Bound:}
  Solve the problem in \eqref{P3PenaltySubrectConvexLowerBound} within $\Phi^{k,r}$ ($r = 1,2$). If it is infeasible, delete (fathoming) $\Phi^{k,r}$ from $\mathcal{Z}$. Otherwise, we obtain the intermediate optimal solution, $\left(\mathbf{\overline{s}}_{k,r},{\boldsymbol{{\gamma} }_{k,r}} \right)$, and the local lower bound and upper bound within $\Phi^{k,r}$ given by
  $L_{k,r} = \underline{G}^{\theta}_{k,r}\left(\mathbf{\overline{s}}_{k,r},{\boldsymbol{{\gamma} }_{k,r}} \right)$ and
  $U_{k,r} = G^{\theta}\left(\mathbf{\overline{s}}_{k,r},{\boldsymbol{{\gamma} }_{k,r}} \right)$, respectively.

\STATE \textbf{Update $\mathcal{Z}$, $\mathcal{W}$, $\mathcal{V}$, $\mathrm{LBD}_k$, and $\mathrm{UBD}_k$:}
  For $L_{k,r} \ge \mathrm{UBD}_k$, delete (fathoming) $\Phi^{k,r}$ from $\mathcal{Z}$.
  If $\mathcal{Z} = \emptyset$, then stop and return the incumbent point. Update parameters with
    \vspace*{-3mm}
    \begin{eqnarray}\label{UpdateParameters}
    \hspace*{-10mm}&&\mathcal{W} = \mathcal{W} \bigcup L_{k,r} \; \text{and} \;\mathcal{V} = \mathcal{V} \bigcup U_{k,r},\; \text{if} \; \Phi^{k,r} \in \mathcal{Z},\\
    \hspace*{-10mm}&&\mathrm{LBD}_{k+1} = L_{k^*,r^*} = \underset{k',r'}{\mino}\;\left(\mathcal{W}\right)\; \text{and} \;
    \mathrm{UBD}_{k+1} = U_{k^\circ,r^\circ} = \underset{k',r'}{\mino}\;\left(\mathcal{V}\right),\; r'\in \left\{1,2\right\},\; k'\in \left\{1, \ldots , k\right\}.\label{GlobalUpperBound}
    \end{eqnarray}
    \vspace*{-8mm}
\STATE \textbf{Update Current Point, Incumbent Point, and Current Rectangle:}
   $k = k+1$. Update the current point with     $\left(\mathbf{\overline{s}}_{k}^{\mathrm{cur}},{\boldsymbol{{\gamma} }_{k}^{\mathrm{cur}}} \right) = \left(\mathbf{\overline{s}}_{k^*,r^*},{\boldsymbol{{\gamma} }_{k^*,r^*}} \right)$ and the incumbent point with
    $\left(\mathbf{\overline{s}}_{k}^{\mathrm{inc}},{\boldsymbol{{\gamma} }_{k}^{\mathrm{inc}}} \right) = \left(\mathbf{\overline{s}}_{k^\circ,r^\circ},{\boldsymbol{{\gamma} }_{k^\circ,r^\circ}} \right)$,
    where $\left(k^*,r^*\right)$ and $\left(k^\circ,r^\circ\right)$ are obtained from \eqref{GlobalUpperBound}.
    Correspondingly, current rectangle is selected as $\Phi^k = \Phi^{k^*,r^*}$.

\STATE \textbf{Convergence Check:}
    If $\left(\mathrm{UBD}_k-\mathrm{LBD}_k\right)>\epsilon$, then go to \textbf{Step} 2.
    Otherwise, ${\epsilon}\text{-}$convergence solution has been attained and return the incumbent point.
\end{algorithmic}
\end{algorithm}\vspace*{-17mm}

\end{table}
\footnotetext[9]{In our algorithm, the current subrectangle is the one which possesses the minimum local lower bound among all the unfathomed subrectangles.}
\footnotetext[10]{The current point denotes the intermediate optimal solution within the current subrectangle.}
\footnotetext[11]{The incumbent point is the best feasible solution that we have found up to current iteration. It is an intermediate optimal solution within some subrectangle which possesses the minimum local upper bound among all the unfathomed subrectangles.\vspace*{-10mm}}

\subsubsection{Overall Algorithm}
Based on the proposed branching procedure and bounding
methods, we develop the B\&B resource allocation algorithm to obtain the globally optimal solution for the problem in \eqref{P3Penalty}, cf. \textbf{Algorithm} \ref{alg1}.
Accordingly, Figure \ref{BnB:b} illustrates a simple example of the developed algorithm in a one-dimensional space, where $\underline{G}^{\theta}_{k,r}\left(\mathbf{\overline{s}},{\boldsymbol{{\gamma} }} \right)$, $r = 1,2$, denotes the lower bounding function for $G^{\theta}\left(\mathbf{\overline{s}},{\boldsymbol{{\gamma} }} \right)$ in \eqref{P3PenaltySubrect} within $\Phi^{k,r}$.
In Step 4, if $L_{k,r} \ge \mathrm{UBD}_k$, the optimal solution must not locate in $\Phi^{k,r}$, and thus we discard it from $\mathcal{Z}$, such as $\Phi^{k,1}$ in Figure \ref{BnB:b}.
Note that the lower bound within subrectangle $\Phi^{k,r}$ is always larger than that within $\Phi^{k}$ since the feasible solution set becomes smaller, i.e., $L_{k,r} \ge L_{k}$.
Therefore, the global lower bound update and current rectangle selection operation in Steps 4 and 5 can generate a \emph{non-decreasing} sequence for $\mathrm{LBD}_k$.
On the other hand, the global upper bound update operation can generate a \emph{non-increasing} sequence for $\mathrm{UBD}_k$.
For example, in Figure \ref{BnB:b}, we can easily observe that $\mathrm{UBD}_{k+1}\le\mathrm{UBD}_k$ and $\mathrm{LBD}_{k+1}\ge\mathrm{LBD}_k$.
It can be proved that the proposed branch and bound algorithm converges to the globally optimal solution in finite number of iterations based on the sufficient conditions stated in \cite{horst2013global}.
The proof of convergence for the adopted B\&B algorithm can be found in \cite{MARANASProofBB}.
The convergence speed of our proposed algorithm will be verified by simulations in Section \ref{SimulationResults}.
\begin{figure}[t]
\centering
\vspace*{-8mm}
\includegraphics[width=3.5in]{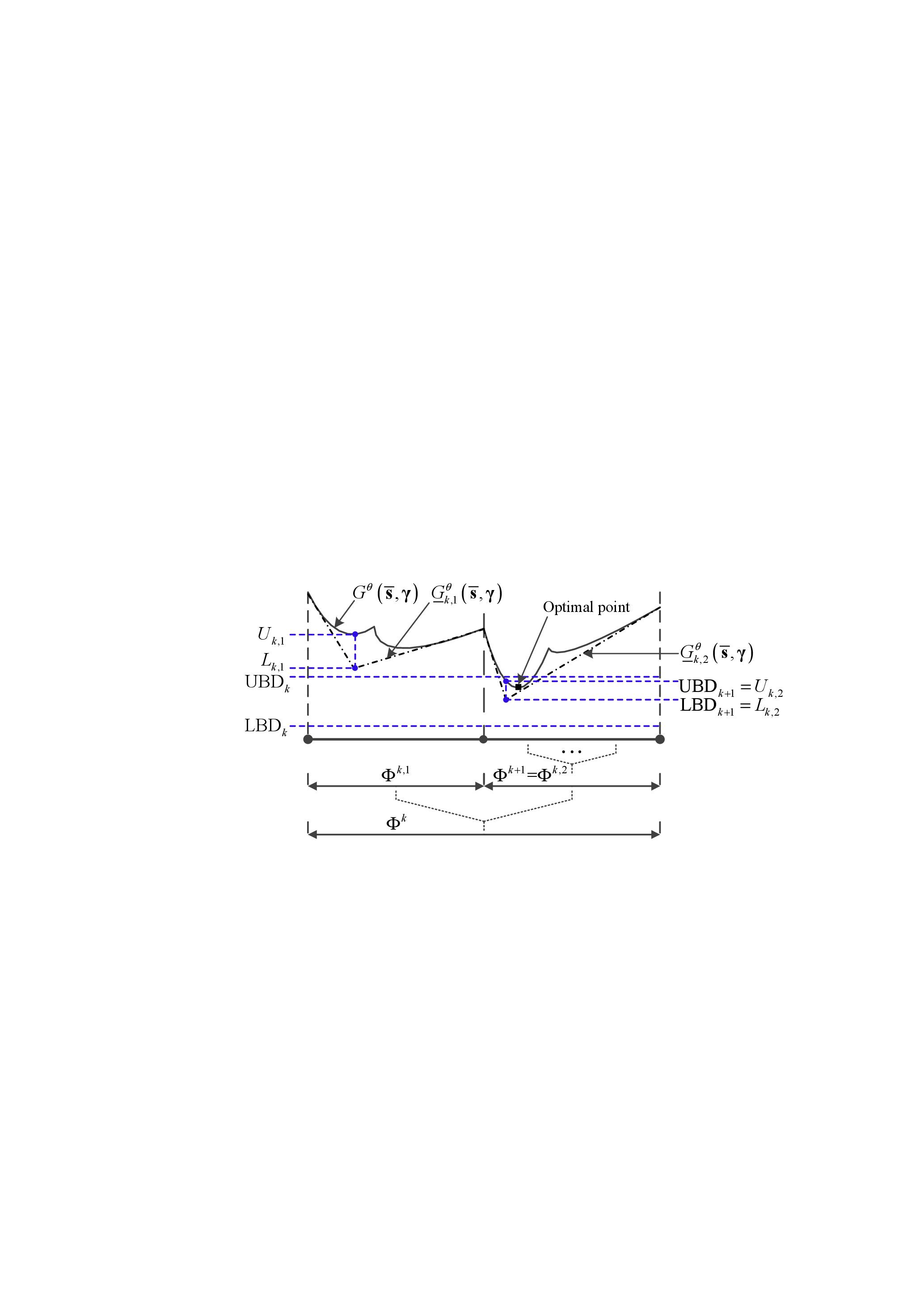}\vspace*{-6mm}
\caption{An illustration of \textbf{Algorithm} \ref{alg1} from the $k$-th iteration to the $(k+1)$-th iteration in a one-dimensional space.}\vspace*{-10mm}
\label{BnB:b}
\end{figure}
\vspace*{-4mm}
\section{Suboptimal Solution} \label{Suboptimal}
Compared to the brute-force search method, the proposed B\&B algorithm provides a systematic approach by exploiting the structure of the problem in \eqref{P3Penalty}.
It saves a large amount of computational complexity since it discards the non-promising subrectangles\cite{horst2013global}.
More importantly, it serves as a performance benchmark for any suboptimal algorithm.
However, it has a non-polynomial time computational complexity\cite{Konno2000}.
In this section, we present a suboptimal solution for the problem in \eqref{P2} by exploiting the D.C. programming\cite{dinh2010local}, which only requires a polynomial time computational complexity.

To start with, we aim to circumvent the coupling between binary variables ${s}_{i,m}$ and continuous variables ${\gamma} _{i,m}$ in \eqref{P2}. We define an auxiliary variable $\widetilde{\gamma}_{i,m} = {\gamma} _{i,m}{s}_{i,m}$ and adopt the big-M formulation\cite{DerrickFD2016} to equivalently transform the problem in \eqref{P2} as follows:
\vspace*{-2mm}
\begin{align}\label{P4}
&\underset{\mathbf{s},\boldsymbol{\gamma},\widetilde{\boldsymbol{\gamma}}}{\mino}\,\, \,\, \notag \sum\limits_{i = 1}^{{N_{\mathrm{F}}}} {\sum\limits_{m = 1}^M {\frac{{{\widetilde{\gamma} _{i,m}}}}{{{\beta _{i,m}}}}} }  + \sum\limits_{i = 1}^{{N_{\mathrm{F}}}} {\sum\limits_{m = 1}^{M-1} {\sum\limits_{n = m + 1}^M {\frac{{{\widetilde{\gamma} _{i,m}}{\widetilde{\gamma} _{i,n}}}}{{\max \left( {{\beta _{i,m}},{\beta _{i,n}}} \right)}}} } } \\[-3mm]
\notag\mbox{s.t.}\;\;
&\mbox{\textbf{C1}},\;\mbox{\textbf{C4}},\;\mbox{\textbf{C7}},\notag\\[-3mm]
&\mbox{\textbf{C6}: } \sum\limits_{i = 1}^{{N_{\mathrm{F}}}} {s_{i,m}}{{{\log }_2}\hspace*{-1mm}\left( {1 \hspace*{-1mm}+\hspace*{-1mm}\frac{\widetilde{\gamma} _{i,m}}{{s_{i,m}}}} \right)\hspace*{-1mm}}  \ge {{R}_m^{\mathrm{total}}},\;\forall m,
\;\mbox{\textbf{C9}: } {{\widetilde {\gamma} }_{i,m}} \hspace*{-1mm}\ge\hspace*{-1mm} 0,\;\forall i,m,
\;\mbox{\textbf{C10}: } {{\widetilde {\gamma}}_{i,m}} \hspace*{-1mm}\le\hspace*{-1mm} {\gamma _{i,m}},\;\forall i,m,\notag\\[-2mm]
&\mbox{\textbf{C11}: } {{\widetilde {\gamma} }_{i,m}} \le {s_{i,m}} \hspace*{-1mm}\left(2^{R_m^{{\mathrm{total}}}}\hspace*{-1mm}-\hspace*{-1mm}1\right)\hspace*{-1mm},
\;\forall i,m,
\;\mbox{\textbf{C12}: } {{\widetilde {\gamma} }_{i,m}} \ge {\gamma _{i,m}} \hspace*{-1mm}-\hspace*{-1mm} \left( {1 \hspace*{-1mm}-\hspace*{-1mm} {s_{i,m}}} \right)\hspace*{-1mm} \left(2^{R_m^{{\mathrm{total}}}}\hspace*{-1mm}-\hspace*{-1mm}1\right),\;\forall i,m,
\end{align}
\par
\vspace*{-3mm}
\noindent
where $\widetilde{\boldsymbol{\gamma}} \in \mathbb{R}^{N_{\mathrm{F}}M \times 1}$ denotes the set of the auxiliary variables $\widetilde{\gamma}_{i,m}$ and constraints \textbf{C9}-\textbf{C12} are imposed additionally following the big-M formulation\cite{DerrickFD2016}.
Besides, the binary constraints in \textbf{C1} are another major obstacle for the design of a computationally efficient resource allocation algorithm. Hence, we rewrite the binary constraint \textbf{C1} in its equivalent form:
\vspace*{-2mm}
\begin{equation}\label{C1AB}
\mbox{\textbf{C1a}: }0 \le {s}_{i,m} \le 1 \quad \text{and} \quad
\mbox{\textbf{C1b}: }\sum\limits_{i = 1}^{{N_\mathrm{F}}} {\sum\limits_{m = 1}^M {{{s}_{i,m}}} }  - \sum\limits_{i = 1}^{{N_\mathrm{F}}} {\sum\limits_{m = 1}^M {s_{i,m}^2} }  \le 0, \;\;\forall i,m.
\vspace*{-2mm}
\end{equation}

Furthermore, we rewrite ${{\widetilde{\gamma} _{i,m}}{\widetilde{\gamma} _{i,n}}} = \frac{1}{2}{{\left( {{{\widetilde \gamma }_{i,m}} + {{\widetilde \gamma }_{i,n}}} \right)}^2} - \frac{1}{2}{\left( {\widetilde \gamma _{i,m}^2 + \widetilde \gamma _{i,n}^2} \right)}$ and augment the D.C. constraint \textbf{C1b} into the objective function via a penalty factor $\eta \gg 1$. The problem in \eqref{P4} can be rewritten in the canonical form of D.C. programming as follows:
\vspace*{-4mm}
\begin{equation}\label{P4DC}
\underset{\mathbf{s},\boldsymbol{\gamma},\widetilde{\boldsymbol{\gamma}}}{\mino}\,\, \,\, G_{1}^{\eta}\left(\mathbf{{s}},{\boldsymbol{\widetilde{\gamma} }} \right) - G_{2}^{\eta}\left(\mathbf{{s}},{\boldsymbol{\widetilde{\gamma} }} \right) \;\;\;
\mbox{s.t.}
\;\;\mbox{\textbf{C1a}},\;\mbox{\textbf{C4}},\;\mbox{\textbf{C6}},\;\mbox{\textbf{C7}},\;\mbox{\textbf{C9}-\textbf{C12}},
\vspace*{-3mm}
\end{equation}
where
\vspace*{-2mm}
\begin{align}
G_{1}^{\eta}\left(\mathbf{{s}},{\boldsymbol{\widetilde{\gamma} }} \right) &= \sum\limits_{i = 1}^{{N_\mathrm{F}}} {\sum\limits_{m = 1}^M {\frac{{{{\widetilde \gamma }_{i,m}}}}{{{\beta _{i,m}}}}} }  + \frac{1}{2}\sum\limits_{i = 1}^{{N_\mathrm{F}}} {\sum\limits_{m = 1}^{M - 1} {\sum\limits_{n = m + 1}^M {\frac{{{{\left( {{{\widetilde \gamma }_{i,m}} + {{\widetilde \gamma }_{i,n}}} \right)}^2}}}{{\max \left( {{\beta _{i,m}},{\beta _{i,n}}} \right)}}} } }  + \eta \sum\limits_{i = 1}^{{N_\mathrm{F}}} {\sum\limits_{m = 1}^M {{s_{i,m}}} } ,\label{DC1}\\[-1mm]
G_{2}^{\eta}\left(\mathbf{{s}},{\boldsymbol{\widetilde{\gamma} }} \right) &= \frac{1}{2}\sum\limits_{i = 1}^{{N_\mathrm{F}}} {\sum\limits_{m = 1}^{M - 1} {\sum\limits_{n = m + 1}^M {\frac{{\left( {\widetilde \gamma _{i,m}^2 + \widetilde \gamma _{i,n}^2} \right)}}{{\max \left( {{\beta _{i,m}},{\beta _{i,n}}} \right)}}} } }  + \eta \sum\limits_{i = 1}^{{N_\mathrm{F}}} {\sum\limits_{m = 1}^M {s_{i,m}^2} } \label{DC2}.
\end{align}
\par\vspace*{-2mm}

According to Theorem \ref{Theorem3}, the problem in \eqref{P4DC} is equivalent to the problem in \eqref{P4}. Note that ${G_{1}^{\eta}} \left(\mathbf{{s}},{\boldsymbol{\widetilde{\gamma} }} \right)$ and ${G_{2}^{\eta}} \left(\mathbf{{s}},{\boldsymbol{\widetilde{\gamma} }} \right)$ are differentiable convex functions w.r.t. $s_{i,m}$ and ${{\widetilde \gamma }_{i,m}}$. Therefore, for any feasible point $\left(\mathbf{{s}}_{k},{\boldsymbol{\widetilde{\gamma} }}_{k} \right)$, we can define the global underestimator for ${G_{2}^{\eta}} \left(\mathbf{{s}},{\boldsymbol{\widetilde{\gamma} }} \right)$ based on its first order Taylor's expansion at $\left(\mathbf{{s}}_{k},{\boldsymbol{\widetilde{\gamma} }}_{k} \right)$ as follows:
\vspace*{-3.5mm}
\begin{equation}\label{Taylor1}
G_2^\eta\left( {{\mathbf{{s}}},{\boldsymbol{\widetilde{\gamma} }}} \right) \ge G_2^\eta\left( {{{\mathbf{{s}}}_k},{\boldsymbol{\widetilde{\gamma} }_k}} \right) + {\nabla _{{\mathbf{{s}}}}}G_2^\eta{\left( {{{{\mathbf{{s}}}}_k},{{\boldsymbol{\widetilde{\gamma} }}_k}} \right)^{\mathrm{T}}}\left( {{\mathbf{{s}}} - {{{\mathbf{{s}}}}_k}} \right) + {\nabla _{\boldsymbol{\widetilde{\gamma} }}}G_2^\eta{\left( {{{{\mathbf{{s}}}}_k},{{\boldsymbol{\widetilde{\gamma} }}_k}} \right)^{\mathrm{T}}}\left( {{\boldsymbol{\widetilde{\gamma} }} - {{\boldsymbol{\widetilde{\gamma} }}_k}} \right),
\vspace*{-3.5mm}
\end{equation}
where ${\nabla _{{\mathbf{{s}}}}}G_2^\eta{\left( {{{{\mathbf{{s}}}}_k},{{\boldsymbol{\widetilde{\gamma} }}_k}} \right)}$ and ${\nabla _{\boldsymbol{\widetilde{\gamma} }}}G_2^\eta{\left( {{{{\mathbf{{s}}}}_k},{{\boldsymbol{\widetilde{\gamma} }}_k}} \right)}$ denote the gradient vectors of ${G_2^\eta} \left(\mathbf{{s}},{\boldsymbol{\widetilde{\gamma} }} \right)$ at $\left( {{{{\mathbf{{s}}}}_k},{{\boldsymbol{\widetilde{\gamma} }}_k}} \right)$ w.r.t. ${{\mathbf{{s}}}}$ and ${\boldsymbol{\widetilde{\gamma} }}$, respectively. Then, we obtain an upper bound for the problem in \eqref{P4DC} by solving the following convex optimization problem:
\vspace*{-4mm}
\begin{eqnarray}\label{P4DCUpperBound} \hspace*{-5mm}&&\underset{\mathbf{{s}},\boldsymbol{{\gamma}},\boldsymbol{\widetilde{\gamma}}}{\mino}\,\, \notag G_1^\eta\hspace*{-1mm}\left(\mathbf{{s}},{\boldsymbol{\widetilde{\gamma} }} \right) \hspace*{-1mm}-\hspace*{-1mm} G_2^\eta\hspace*{-1mm}\left( {{{\mathbf{{s}}}_k},{\boldsymbol{\widetilde{\gamma} }_k}} \right) \hspace*{-1mm}-\hspace*{-1mm} {\nabla _{{\mathbf{{s}}}}}G_2^\eta\hspace*{-0.5mm}{\left( {{{{\mathbf{{s}}}}_k},{{\boldsymbol{\widetilde{\gamma} }}_k}} \right)^{\mathrm{T}}}\hspace*{-1.5mm}\left( {{\mathbf{{s}}} \hspace*{-1mm}-\hspace*{-1mm} {{{\mathbf{{s}}}}_k}} \right)
\hspace*{-1mm}-\hspace*{-1mm} {\nabla _{\boldsymbol{\widetilde{\gamma} }}}G_2^\eta\hspace*{-0.5mm}{\left( {{{{\mathbf{{s}}}}_k},{{\boldsymbol{\widetilde{\gamma} }}_k}} \right)^{\mathrm{T}}}\hspace*{-1.5mm}\left( {{\boldsymbol{\widetilde{\gamma} }} \hspace*{-1mm}-\hspace*{-1mm} {{\boldsymbol{\widetilde{\gamma} }}_k}} \right)\notag\\[-1mm]
\hspace*{-5mm}\mbox{s.t.}\hspace*{-5mm}
&&\mbox{\textbf{C1a}},\;\mbox{\textbf{C4}},\;\mbox{\textbf{C6}},\;\mbox{\textbf{C7}},\;\mbox{\textbf{C9}-\textbf{C12}},
\end{eqnarray}
\par
\vspace*{-2mm}
\noindent
where ${\nabla _{{\mathbf{{s}}}}}G_2^\eta{\left( {\mathbf{{s}_k},{{\boldsymbol{\widetilde{\gamma} }}_k}} \right)^{\mathrm{T}}}\left( {{\mathbf{{s}}} - {{{\mathbf{{s}}}}_k}} \right) =  2\eta \sum\limits_{i = 1}^{{N_{\mathrm{F}}}} {\sum\limits_{m = 1}^M { {s}_{i,m}^k } \left( {{{{s}}_{i,m}} - {s}_{i,m}^k} \right)}$ and
${\nabla _{\boldsymbol{\widetilde{\gamma} }}}G_2^\eta{\left( {{{{\mathbf{{s}}}}_k},{{\boldsymbol{\widetilde{\gamma} }}_k}} \right)^{\mathrm{T}}} \left( {{\boldsymbol{\widetilde{\gamma} }} - {{\boldsymbol{\widetilde{\gamma} }}_k}} \right) = \\ \sum\limits_{i = 1}^{{N_{\rm{F}}}} {\sum\limits_{m = 1}^M {\sum\limits_{n \ne m}^M {\frac{{\gamma _{i,m}^k\left( {{\gamma _{i,m}} - \gamma _{i,m}^k} \right)}}{{\max \left( {{\beta _{i,m}},{\beta _{i,n}}} \right)}}} } }.$
\begin{table}
\vspace*{-10mm}
\begin{algorithm} [H]                    
\caption{Suboptimal Resource Allocation Algorithm}     
\label{alg2}                             
\begin{algorithmic} [1]
\footnotesize          
\STATE \textbf{Initialization}\\
Initialize the convergence tolerance $\epsilon$, the maximum number of iterations $K_\mathrm{max}$, the iteration counter $k = 1$, and the initial feasible solution $\left(\mathbf{{s}}_k,{\boldsymbol{\widetilde{\gamma} }_k} \right)$.

\REPEAT
\STATE Solve \eqref{P4DCUpperBound} for a given $\left(\mathbf{{s}}_k,{\boldsymbol{\widetilde{\gamma} }_k} \right)$ and obtain the intermediate resource allcation policy $\left(\mathbf{{s}}',{\boldsymbol{\widetilde{\gamma} }'} \right)$
\STATE Set $k=k+1$ and $\left(\mathbf{{s}}_k,{\boldsymbol{\widetilde{\gamma} }_k} \right) = \left(\mathbf{{s}}',{\boldsymbol{\widetilde{\gamma} }'} \right)$
\UNTIL
$k = K_\mathrm{max}$ or $\max \left\{ {\left\| {\left(\mathbf{{s}}_k,{\boldsymbol{\widetilde{\gamma} }_k} \right) - \left(\mathbf{{s}}_{k-1},{\boldsymbol{\widetilde{\gamma} }_{k-1}} \right)} \right\|_2} \right\} \le \epsilon$
\STATE Return the optimal solution $\left(\mathbf{{s}}^*,{\boldsymbol{\widetilde{\gamma} }^*} \right) = \left(\mathbf{{s}}_k,{\boldsymbol{\widetilde{\gamma} }_k} \right)$
\end{algorithmic}
\end{algorithm}\vspace*{-18mm}
\end{table}

Now, the problem in \eqref{P4DCUpperBound} is a convex programming problem which can be easily solved by CVX \cite{cvx}.
The solution of the problem in \eqref{P4DCUpperBound} provides an upper bound for the problem in \eqref{P4DC}.
\textcolor[rgb]{0.00,0.00,0.00}{To tighten the obtained upper bound, we employ an iterative algorithm to generate a sequence of feasible solution successively, cf. \textbf{Algorithm} \ref{alg2}.
In \textbf{Algorithm} \ref{alg2}, the initial feasible solution $\left(\mathbf{{s}}_1,{\boldsymbol{\widetilde{\gamma} }_1} \right)$ is obtained via solving the problem in \eqref{P4DC} with $G_{1}^{\eta}\left(\mathbf{{s}},{\boldsymbol{\widetilde{\gamma} }} \right)$
as the objective function\footnote{\textcolor[rgb]{0.00,0.00,0.00}{Note that with $G_{1}^{\eta}\left(\mathbf{{s}},{\boldsymbol{\widetilde{\gamma} }} \right)$
as the objective function, the problem in (41) is a convex problem. Therefore, the initial feasible solution can be found via existing algorithms \cite{Boyd2004} for solving convex problems with a polynomial time computational complexity.}\vspace*{-10mm}}.
The problem in \eqref{P4DCUpperBound} is updated with the intermediate solution from the last iteration and is solved to generate a feasible solution for the next iteration.
Such an iterative procedure will stop when the maximum iteration number is reached or the change of optimization variables is smaller than a predefined convergence tolerance.}
It has been shown that the proposed suboptimal algorithm converges a stationary point with a polynomial time computational complexity for differentiable ${G_{\eta}^{1}} \left(\mathbf{{s}},{\boldsymbol{\widetilde{\gamma} }} \right)$ and ${G_{\eta}^{2}} \left(\mathbf{{s}},{\boldsymbol{\widetilde{\gamma} }} \right)$ \cite{VucicProofDC}.
Noted that there is no guarantee that \textbf{Algorithm} \ref{alg2} can
converge to a globally optimum of the problem in \eqref{P2}. However, our simulation results in the next section will demonstrate its close-to-optimal performance.

\textcolor[rgb]{0.00,0.00,0.00}{In practice, different numerical methods can be used to solve the convex problem in \eqref{P4DCUpperBound} \cite{Boyd2004}.
Particularly, the computational complexity of proposed suboptimal algorithm implemented by the primal-dual path-following interior-point method is \cite{Nemirovski2004IPM}
\vspace*{-2mm}
\begin{equation}
\mathcal{O}\left(K_{\mathrm{max}}\underbrace{\left(\sqrt{8N_{\mathrm{F}}M+N_{\mathrm{F}}+M}\ln(\frac{1}{\Delta})\right)}_{\text{Number of Newton iterations}}
\underbrace{\left( \left(N_{\mathrm{F}}+M\right)(N_{\mathrm{F}}M)^2+35(N_{\mathrm{F}}M)^3\right)}_{\text{Complexity per Newton iteration}}\right),
\end{equation}
for a given solution accuracy $\Delta>0$ of the adopted numerical solver, where $\mathcal{O}(\cdot)$ is the big-O notation.
On the other hand, for the proposed optimal algorithm in \textbf{Algorithm} \ref{alg1}, we note that although the B\&B algorithm is guaranteed to find the optimal solution, the required computational complexity in the worst-case is as high as that of an exhaustive search.
The computational complexity of an exhaustive search for the problem in \eqref{P2} is $\mathcal{O}\Bigg(2^{N_{\mathrm{F}}M}\Big(\prod_{m=1}^{M}\frac{2^{R_m^{\mathrm{total}}}-1}{\Delta}\Big)^{N_{\mathrm{F}}}\Bigg),$
for a given solution accuracy $\Delta>0$.
Therefore, the proposed suboptimal algorithm provides a substantial saving in computational complexity compared to the exhaustive search approach.
We note that proposed suboptimal algorithm with a polynomial time computational complexity is desirable for real time implementation \cite{thomas2001introduction}.}

\section{Simulation Results} \label{SimulationResults}
In this section, we evaluate the performance of our proposed resource allocation algorithms through simulations.
Unless specified otherwise, the system parameters used in the our simulations are given as follows.
A single-cell with a BS located at the center with a cell size of $500$ m is considered.
The carrier center frequency is $1.9$ GHz and the bandwidth of each subcarrier is $15$ kHz.
There are $M$ users randomly and uniformly distributed between $30$ m and $500$ m, i.e., $d_i \sim U[30,\;500]$ m, and their target data rates are generated by ${R}_m^{\mathrm{total}} \sim U[1,\;10]$ bit/s/Hz.
The required outage probability of each user on each subcarrier is generated by $\delta_{i,m} \sim U[10^{-5},\;10^{-1}]$.
The user noise power on each subcarrier is $\sigma_{i,m}^2=-128$ dBm and the variance of channel estimation error is $\kappa^2_{i,m} = 0.1$, $\forall i,m$.
The 3GPP urban path loss model with a path loss exponent of $3.6$ \cite{Access2010} is adopted in our simulations.
For our proposed iterative optimal and suboptimal resource allocation algorithms, the maximum error tolerance is set to $\epsilon = 0.01$ and the penalty factors is set to a sufficiently large number such that the value of the penalty term comparable to the value of the objective function\cite{DerrickFD2016}.
The simulations shown in the sequel are obtained by averaging the results over different realizations of different user distances, target data rates, multipath fading coefficients, and outage probability requirements.

For comparison, we consider the performance of the following three baseline schemes.
For baseline 1, the conventional MC-OMA scheme is considered where each subcarrier can only be allocated to at most one user.
To support all $M$ active users and to have a fair comparison, the subcarrier spacing is changed by a factor of $\frac{N_\mathrm{F}}{M}$ to generate $M$ subcarriers.
Since our proposed scheme subsumes the MC-OMA scheme as a subcase, the minimum power consumption for baseline scheme 1 can be obtained by solving the problem in \eqref{P2} by replacing \textbf{C7} with $\sum\limits_{m = 1}^M {{s_{i,m}}} = 1$.
For baseline 2, a scheme of MC-NOMA with random scheduling is considered where the paired users on each subcarrier is randomly selected\cite{Wei2016NOMA}.
The minimum power consumption for baseline scheme 2 is obtained via solving the problem in \eqref{P4DC} with a given random user scheduling policy $\mathbf{{s}}$ and $\eta = 0$.
For baseline 3, a scheme of MC-NOMA with an equal rate allocation is studied where the target data rate of each user is assigned equally on its allocated subcarriers\cite{Wei2016NOMA}.
Based on the equal rate allocation, the problem in \eqref{P2} can be transformed to a mixed integer linear program, which can be solved by standard numerical integer program solvers, such as Mosek\cite{Mosek2010}, via some non-polynomial time algorithms.
\begin{figure}[t]
\vspace*{-7mm}
\centering
\includegraphics[width=3.5in]{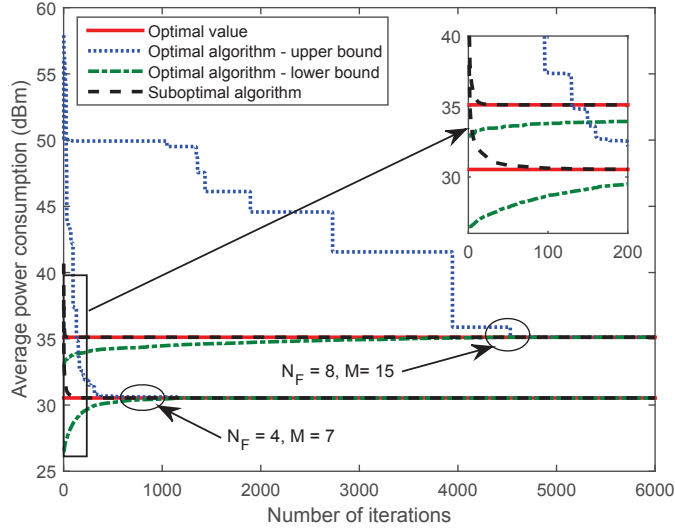}\vspace*{-8mm}
\caption{Convergence of the proposed optimal and suboptimal resource allocation algorithms.}\vspace*{-9mm}
\label{CaseI_Convergence}
\end{figure}
\vspace*{-4mm}
\subsection{Convergence of Proposed Algorithms}
\textcolor[rgb]{0.00,0.00,0.00}{Figure \ref{CaseI_Convergence} illustrates the convergence of our proposed optimal and suboptimal resource allocation algorithms for different values\footnote{Since the computational complexity of the B\&B approach is high, we adopt small values for $M$ and $N_\mathrm{F}$ to compare the gap between the proposed optimal algorithm and the suboptimal algorithm. We note that our proposed suboptimal resource allocation algorithm is computational efficient compared to the optimal one, which can apply to scenarios with more users and subcarriers, such as the simulation case in Section \ref{SimulationD}. In fact, the number of subcarriers in this paper can be viewed as the number of resource blocks in LTE standard\cite{sesia2015lte}, where the user scheduling are performed on resource block level.\vspace*{-10mm}} of $N_{\mathrm{F}}$ and $M$.
For the first case with $N_{\mathrm{F}} = 4$ and $M = 7$, we observe that the optimal algorithm generates a non-increasing upper bound and a non-decreasing lower bound when the number of iterations increases.
Besides, the optimal solution is found when the two bounds meet after $600$ iterations on average.
More importantly, our proposed suboptimal algorithm can converge to the optimal value within $80$ iterations on average.
For the second case with $N_{\mathrm{F}} = 8$ and $M = 15$, it can be observed that the optimal algorithm converges after $4500$ iterations on average.
In fact, the computational complexity of the proposed optimal algorithm increases exponentially w.r.t. the number of optimization variables, and thus the convergence speed is relatively slow for a larger problem size.
However, it can be observed that the suboptimal algorithm converges faster when the numbers of subcarriers and users increase, and it can achieve the optimal value with only 25 iterations in the second case on average.
This is because the time-sharing condition \cite{Yu2006Dual,DerrickLimitedBackhaul} is satisfied with a larger number of subcarriers.
In this case, the optimization problem in \eqref{P4DC} tends to be convexified leading to a higher chance of holding strong duality \cite{DerrickLimitedBackhaul}.
Further, our proposed suboptimal scheme is able to exploit the ``convexity" inherent in large scale optimization problem via the successive convex approximation while the optimal one cannot with relying on feasible set partitioning, cf. Figure \ref{BnB:a}.
Therefore, the proposed suboptimal algorithm converges faster with a larger number of subcarriers in the second case.
To obtain further insight, Table \ref{PairedUsers} shows the solution of our original formulated problem in \eqref{P1} via following the proposed optimal SIC decoding policy for a single channel realization with $N_{\mathrm{F}} = 4$ and $M = 7$ in Figure \ref{CaseI_Convergence}.}
The tick denotes that the user is selected to perform SIC.
It can be observed that two users with distinctive CNR outage thresholds are preferred to be paired together (users 1 and 4, users 3 and 6).
Also, the users (users 2, 4, 6, 7) with higher CNR outage thresholds are selected to perform SIC and only a fraction of power are allocated to them owing to their better channel conditions or non-stringent QoS requirements.
These observations are in analogy to the conclusions for the case of NOMA with perfect CSIT, where users with distinctive channel gains are more likely to be paired, more power is allocated to the weak user, and the strong user is selected to perform SIC\cite{Dingtobepublished}.
Therefore, the defined CNR outage threshold in this paper serves as a metric for determining the optimal SIC decoding policy and resource allocation design for MC-NOMA systems with imperfect CSIT.
\begin{table}[t]
\center \vspace*{-4mm}
\caption{\vspace*{-2mm}Optimal Solution of \eqref{P1} for A Single Channel Realization with $N_{\mathrm{F}} = 4$ and $M = 7$ in Figure \ref{CaseI_Convergence}\vspace*{-2mm}}
  \begin{tabular}{ccccccccc}
  \hline
    Subcarrier index & \multicolumn{2}{c}{1} & \multicolumn{2}{c}{2} & \multicolumn{2}{c}{3} & \multicolumn{2}{c}{4} \\ \hline
    Paired user index & 2 & 5 & 5 & 7 & 1 & 4 & 3 & 6 \\
    Outage threshold $\beta_{i,m}$ & 783.39 & 39.99 & 30.92 & 520.27 & 8.57 & 269.80 & 9.59 & 1349.80 \\
    Rate allocation $R_{i,m}$ (bit/s/Hz)& 8 & 2.03 & 4.97 & 3 & 1 & 7 & 3 & 4\\
    Power allocation $p_{i,m}$ (dBm)& 25.13 & 30.35 & 31.42 & 11.29 & 27.69 & 26.73 & 29.07 & 10.46 \\
    SIC decoding   & \Checkmark & - & - & \Checkmark & - & \Checkmark & - & \Checkmark \\
    \hline
  \end{tabular}
\label{PairedUsers}\vspace*{-10mm}
\end{table}
\vspace*{-4mm}
\subsection{Power Consumption versus Target Data Rate}
In Figure \ref{CaseII_TargetRate}, we investigate the power consumption versus the target data rate with $N_{\mathrm{F}} = 8$ and $M = 12$.
In this simulation, all the users have an identical target data rates ${R}_m^{\mathrm{total}}$ and they are set to be from $1$ bit/s/Hz to $10$ bit/s/Hz.
The three baseline schemes are also included for comparison.
As can be observed from Figure \ref{CaseII_TargetRate}, the power consumption increases monotonically with the target data rate for all the schemes.
Clearly, the BS is required to transmit with a higher power to support a more stringent data rate requirement.
Besides, our proposed optimal and suboptimal resource allocation schemes provide a significant power reduction compared to the baseline schemes.
Specifically, baseline scheme 1 requires a higher power consumption (about $3\sim15$ dB) compared to proposed schemes.
This is attributed to the fact that the proposed NOMA schemes are able to distribute the required target data rate of each user over multiple subcarriers efficiently since they admit multiplexing multiple users on each subcarrier.
For OMA schemes, the power consumption increases exponentially with the target data rate requirement since only one subcarrier is allocated to each user in the overloaded scenario.
As a result, the performance gain of NOMA over OMA in terms of power consumption becomes larger when the target data rate increases.
For baseline scheme 2, it can be observed that NOMA is very sensitive to the user scheduling strategy where NOMA with suboptimal random scheduling even consumes more power than that of OMA schemes.
Therefore, a cautiously design of the user scheduling strategy for MC-NOMA systems is fundamentally important in practice.
For baseline scheme 3, the power consumption is slightly higher than that of the proposed schemes but the performance gap is enlarged with an increasing target data rate.
In fact, baseline scheme 3 shares the target data rate equally across the allocated subcarriers of a user, which can realize most of the performance gain of NOMA in low target data rate regimes.
However, our proposed schemes consume less transmit power for high target data rate by exploiting the frequency diversity, where higher rates are allocated to the subcarriers with better channel conditions.
\begin{figure}[t]
\begin{minipage}{.47\textwidth}
\centering\vspace*{-11mm}
\includegraphics[width=\textwidth]{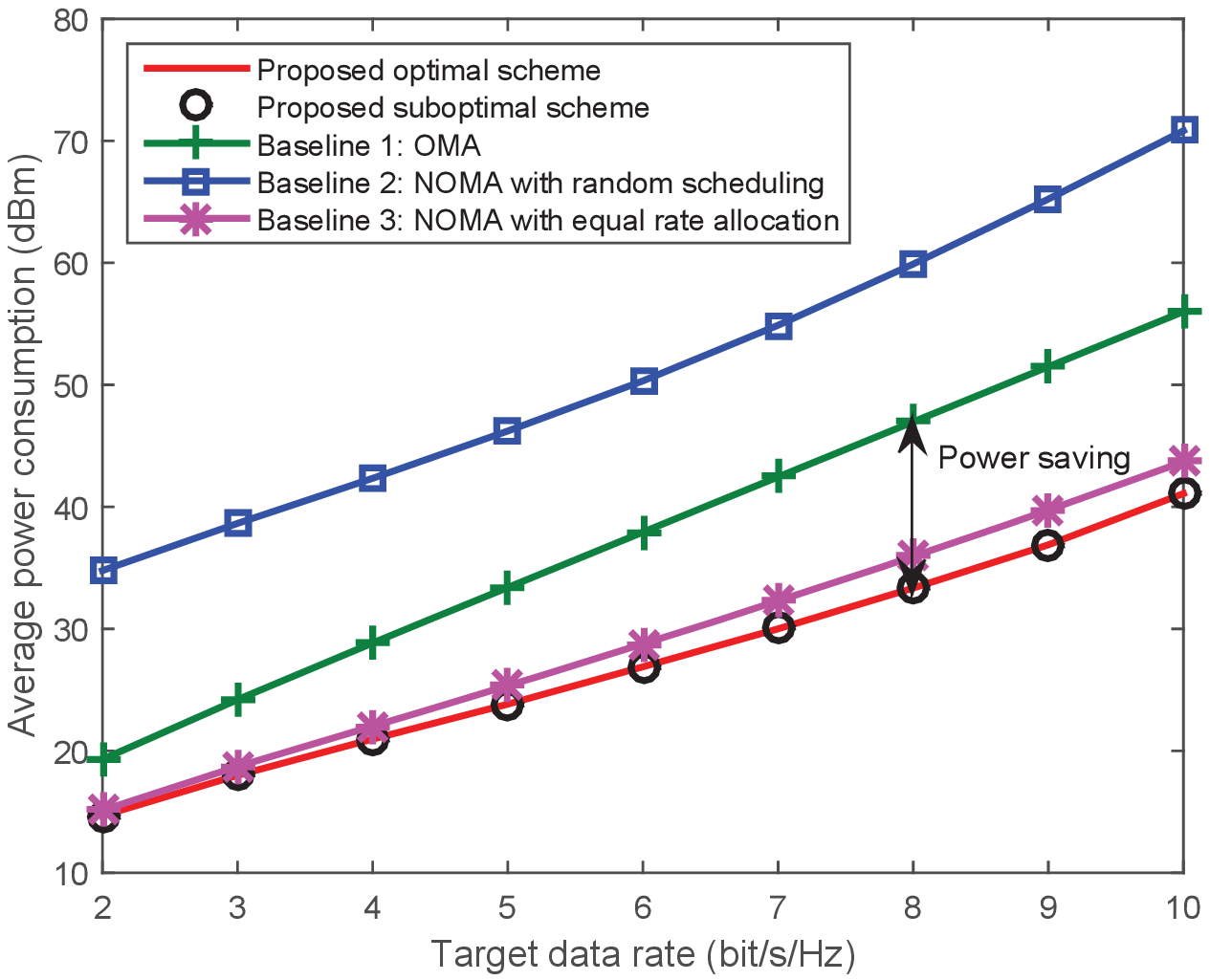}\vspace*{-8mm}
\caption{Average power consumption (dBm) versus target data rate with $N_\mathrm{F} = 8$ and $M = 12$. The power saving gain achieved by the proposed schemes over baseline scheme 1 is denoted by the double-side arrow.}
\label{CaseII_TargetRate}\vspace*{-10mm}
\end{minipage}
\hspace*{1.5mm}
\begin{minipage}{.47\textwidth}
\centering\vspace*{-6mm}
\includegraphics[width=\textwidth]{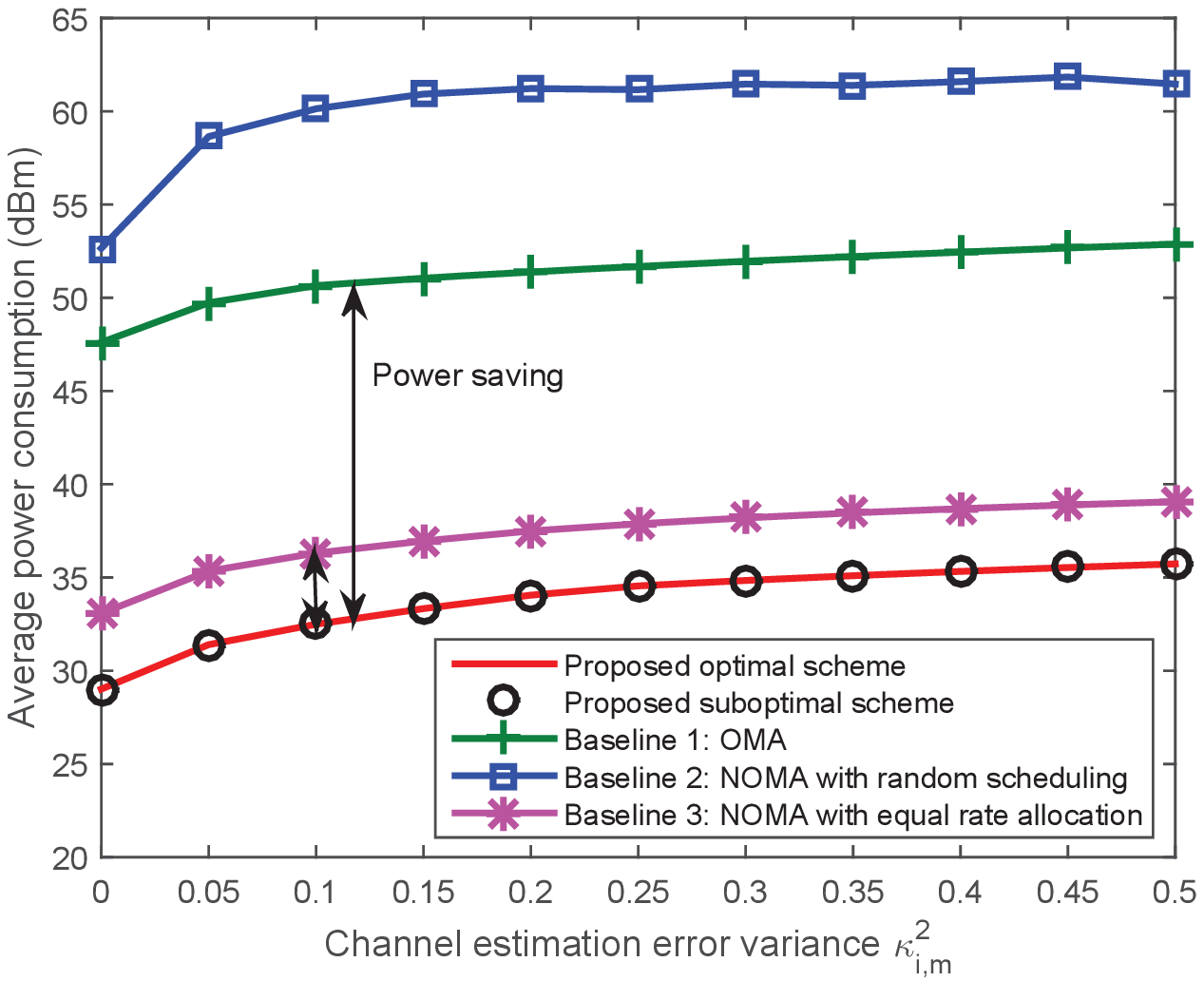}\vspace*{-8mm}
\caption{Average power consumption (dBm) versus channel estimation error variance with $N_\mathrm{F} = 8$ and $M = 12$. The power saving gain achieved by the proposed schemes over baseline scheme 1 and baseline scheme 3 are denoted by the double-side arrows.}
\label{CaseIV_ChannelError}\vspace*{-10mm}
\end{minipage}
\end{figure}
\vspace*{-4mm}
\subsection{Power Consumption versus Channel Estimation Error}
Figure \ref{CaseIV_ChannelError} depicts the power consumption versus the variance of channel estimation error with $N_{\mathrm{F}} = 8$, $M = 12$, and ${R}_m^{\mathrm{total}} \sim U[1,\;10]$ bit/s/Hz.
The variance of channel estimation error $\kappa^2_{i,m}$ increasing from $0$ to $0.5$, where $\kappa^2_{i,m} = 0$ denotes that perfect CSIT is available for resource allocation.
It can be observed that the power consumption increases monotonically with $\kappa^2_{i,m}$ for all the schemes.
It is expected that a higher transmit power is necessary to cope with a larger channel uncertainty to satisfy its required outage probability.
Particularly, for our proposed schemes, baseline scheme 1, and baseline scheme 3, a $6$ dB of extra power is required to handle the channel estimation error when $\kappa^2_{i,m}$ increases from $0$ to $0.5$.
However, our proposed schemes are the most power-efficient among all the schemes.
Furthermore, compared to Figure \ref{CaseII_TargetRate} with identical target data rates, the gap of power consumption between baseline scheme 3 and our proposed schemes at $\kappa^2_{i,m} = 0.1$ is enlarged.
Also, the performance gain of our proposed schemes over baseline scheme 1 is larger than that of Figure \ref{CaseII_TargetRate}.
In fact, our proposed schemes can exploit the heterogeneity of the target data rates via users multiplexing and rate allocation. Particularly, users multiplexing of NOMA enables rate splitting onto multiple subcarriers in the overloaded scenario. Moreover, instead of equal rate allocation, our proposed schemes are more flexible to combat the large dynamic range of target data rates via exploiting the frequency diversity. Therefore, for random target data rate, our proposed schemes are more efficient to reduce the power consumption.
\vspace*{-4mm}
\subsection{Power Consumption versus Number of Users}\label{SimulationD}
\textcolor[rgb]{0.00,0.00,0.00}{Figure \ref{CaseIII_NumUser} illustrates the power consumption versus the number of users with $N_\mathrm{F} = 16$ and ${R}_m^{\mathrm{total}} = 8$ bit/s/Hz, $\forall m$.
The proposed optimal scheme is not included here due to its exponentially computational complexity.
We observe that our proposed scheme is also applicable to underloaded systems with $N_\mathrm{F} > M$, and it is more power-efficient than that of the OMA scheme in both overloaded and underloaded systems.
Furthermore, it can be seen that the power consumption increases with the number of users for all the considered schemes.
This is because a higher power consumption is required when there are more users requiring stringent QoSs.
Besides, our proposed scheme is the most power-efficient among all the schemes.
In particular, compared to the proposed suboptimal scheme, baseline scheme 2 requires a substantially higher power consumption since NOMA requires a careful design of user scheduling to cope with the inherent interference.
On the contrary, baseline scheme 3 needs a slightly higher power than the proposed suboptimal scheme.
As mentioned before, baseline scheme 3 can exploit most of the performance gain of NOMA via enabling multiuser multiplexing with equal rate allocation.}

\begin{figure}[t]
\vspace*{-7mm}
\centering
\includegraphics[width=3.5in]{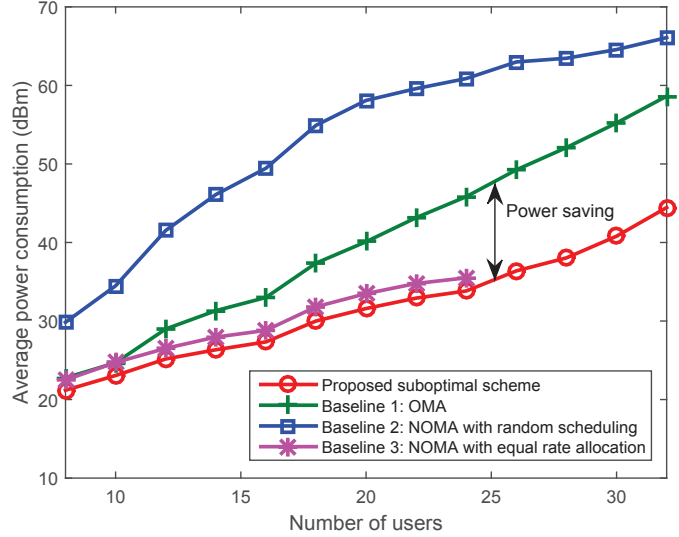}\vspace*{-8mm}
\caption{Average power consumption (dBm) versus number of users with $N_\mathrm{F} = 16$ and ${R}_m^{\mathrm{total}} = 8$ bit/s/Hz. The power saving gain achieved by the proposed scheme over baseline scheme 1 is denoted by the double-side arrow.}
\label{CaseIII_NumUser}\vspace*{-9mm}
\end{figure}

\textcolor[rgb]{0.00,0.00,0.00}{Compared to baseline scheme 1, we can observe that the power saving brought by our proposed suboptimal scheme increases with the number of users.
This can be attributed to the spectral efficiency gain \cite{Ding2015b} and multiuser diversity gain \cite{Sun2016Fullduplex} of NOMA.
On the one hand, NOMA allows multiuser multiplexing on each subcarrier, which provides higher spectral efficiency than that of OMA.
As a result, a smaller amount of power is able to support the NOMA users' QoS requirements than the users using OMA.
Besides, the proposed scheme can efficiently exploit the spectral efficiency gain to reduce the power consumption compared to baseline scheme 1.
In particular, with an increasing number of users, the spectrum available to each user in baseline scheme 1 becomes less due to the exclusive subcarrier allocation constraint, while relatively more spectrum is available in the proposed MC-NOMA scheme owing to the power domain multiplexing.
Consequently, the proposed scheme can save more power compared to the baseline scheme 1.
On the other hand, NOMA possesses a higher capability in exploiting the multiuser diversity than that of OMA.
Particularly, instead of scheduling a single user on each subcarrier in OMA, NOMA enables multiuser multiplexing on each subcarrier, which promises more degrees of freedom for user selection and power allocation to exploit the multiuser diversity.
Therefore, our proposed NOMA scheme with the suboptimal resource allocation design can effectively utilize the multiuser diversity to reduce the total transmit power.
In fact, in the considered MC-NOMA systems, the multiuser diversity comes from the heterogeneity of CNR outage thresholds.
The CNR outage thresholds become more heterogeneous for an increasing number of users. Thus, the power saving gain brought by the proposed NOMA scheme over the OMA scheme increases with the number of users.}
\vspace*{-4mm}
\subsection{Outage Probability}
\begin{figure}[t]
\centering\vspace*{-8mm}
\subfigure[Outage probability for all the users with $\kappa^2_{i,m} = 0.1$.]
{\label{CaseV_OutageProbability:a} 
\includegraphics[width=0.46\textwidth]{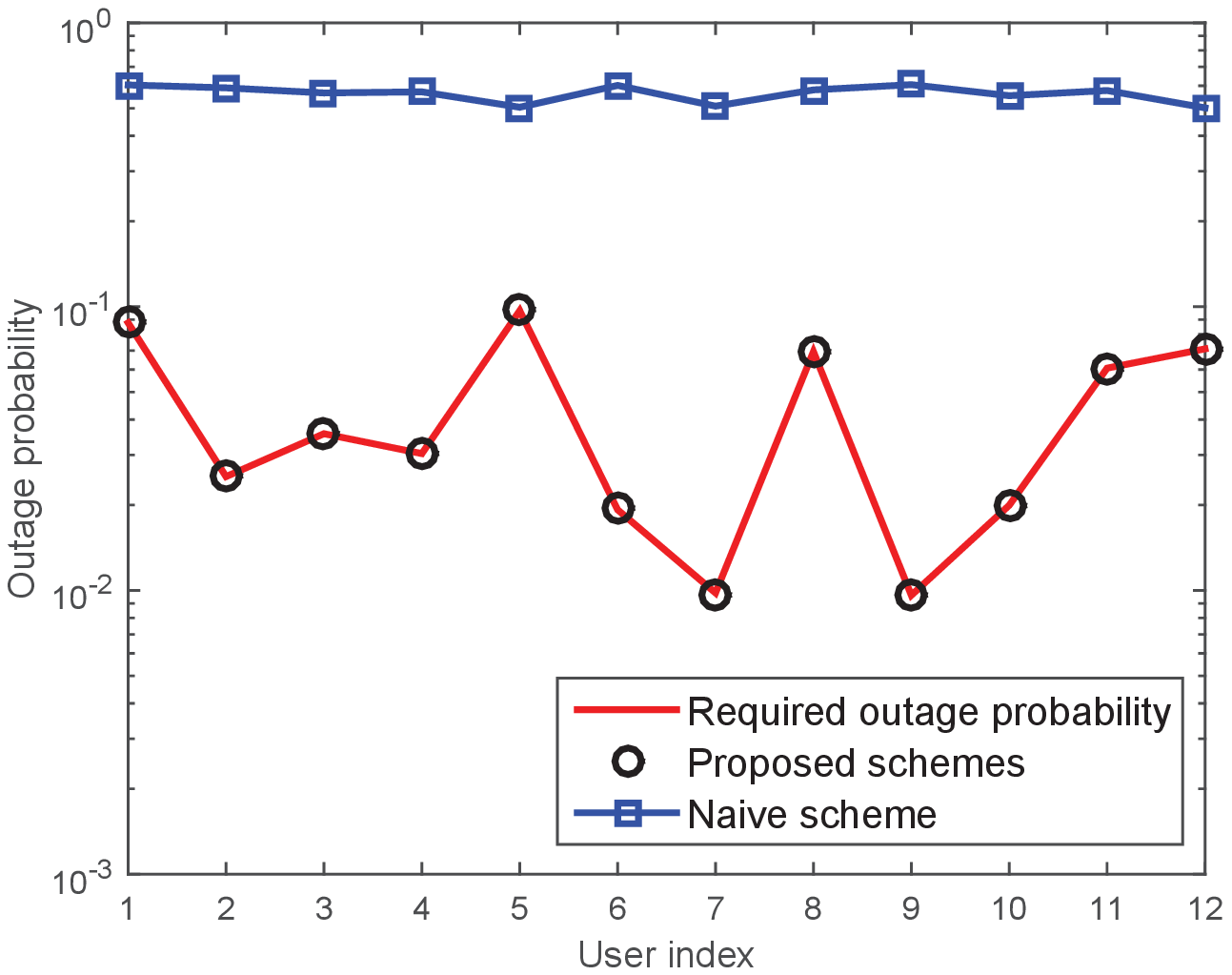}}\vspace*{-2mm}
\subfigure[Outage probability versus $\kappa^2_{i,m}$ for user 9.]
{\label{CaseV_OutageProbability:b} 
\includegraphics[width=0.46\textwidth]{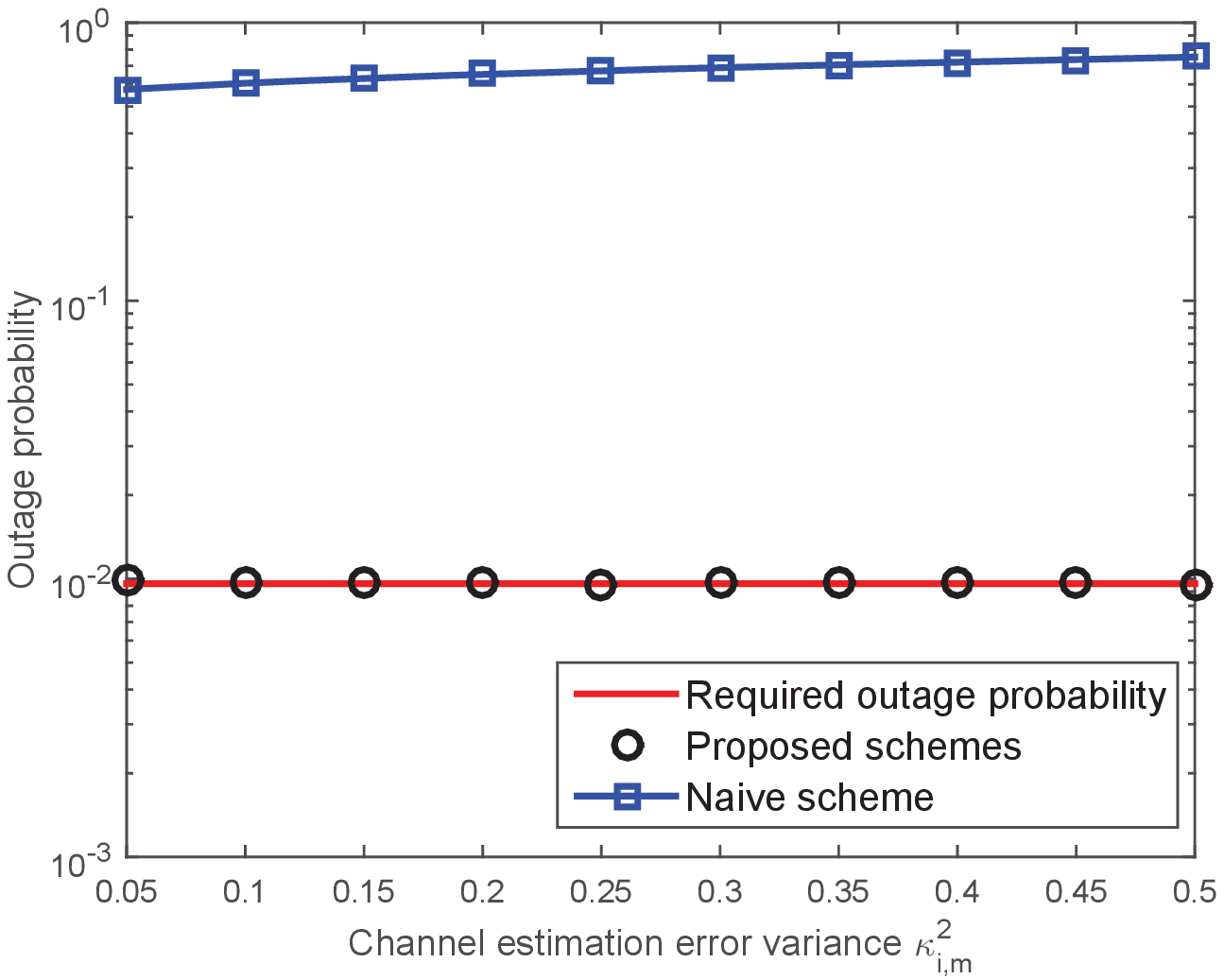}}\vspace*{-2mm}
\caption{Outage probability of our proposed scheme and a naive scheme with $N_\mathrm{F} = 8$ and $M = 12$.}\vspace*{-10mm}
\label{CaseV_OutageProbability}%
\end{figure}

In this simulation, we introduce a naive scheme where the resource allocation is performed by treating the estimated channel coefficient $\hat{h}_{i,m}$ as perfect CSIT.
Figures \ref{CaseV_OutageProbability:a} and \ref{CaseV_OutageProbability:b} compare the outage probability for our proposed schemes and the naive scheme with $N_\mathrm{F} = 8$ and $M = 12$.
Figure \ref{CaseV_OutageProbability:a} illustrates the outage probability for all the users with channel estimation error variance $\kappa^2_{i,m} = 0.1$.
It can be observed that our proposed schemes can satisfy the required outage probability of all the users while the naive scheme leads to a significantly higher outage probability than the required.
Figure \ref{CaseV_OutageProbability:b} shows the outage probability versus $\kappa^2_{i,m}$ for user 9.
It can be observed that our proposed scheme can always satisfy the required outage probability, despite $\kappa^2_{i,m}$ increases from $0.05$ to $0.5$.
In contrast, the outage probability for the naive scheme increases with $\kappa^2_{i,m}$ due to the deteriorated quality of channel estimates.
In fact, our resource allocation design can guarantee the required outage probability, at the expense of a slightly higher transmit power compared to the case of perfect CSIT, cf. Figure \ref{CaseIV_ChannelError} for $\kappa^2_{i,m} = 0$.

\vspace*{-2mm}
\section{Conclusion}
In this paper, we studied the power-efficient resource allocation algorithm design for MC-NOMA systems. The resource allocation algorithm design was formulated as a non-convex optimization problem and it took into account the imperfect CSIT and heterogenous QoS requirements. We proposed an optimal resource allocation algorithm, in which the optimal SIC decoding policy was determined by the CNR outage threshold. Furthermore, a suboptimal resource allocation scheme was proposed based on D.C. programming, which can converge to a close-to-optimal solution rapidly. Simulation results showed that our proposed resource allocation schemes provide significant transmit power savings and enhanced robustness against channel uncertainty via exploiting the heterogeneity of channel conditions and QoS requirements of users in MC-NOMA systems.
\vspace*{-4mm}
\section*{Appendix}
\begin{appendices}
\vspace*{-2mm}
\subsection{Proof of Theorem \ref{Theorem1}}\label{AppendixA}
\vspace*{-1mm}
In the following, we prove Theorem \ref{Theorem1} by comparing the total transmit power for four kinds of SIC policies. Given user $m$ and user $n$ multiplexed on subcarrier $i$, there are following four possible cases on SIC decoding order:
\vspace*{-5mm}
\begin{multicols}{2}
\begin{itemize}
  \item Case I: $u_{i,m} = 1$, $u_{i,n} = 0$,
  \vspace*{-2mm}
  \item Case II: $u_{i,m} = 0$, $u_{i,n} = 1$,
  \item Case III: $u_{i,m} = 1$, $u_{i,n} = 1$,
  \vspace*{-2mm}
  \item Case IV: $u_{i,m} = 0$, $u_{i,n} = 0$,
\end{itemize}
\end{multicols}
\vspace*{-5mm}
\noindent where Case I and Case II correspond to selecting only user $m$ or user $n$ to perform SIC, respectively. Case III and Case IV correspond to selecting both users or neither user to perform SIC, respectively.
In Case I, user $m$ is selected to perform SIC and user $n$ directly decodes its own message. According to \eqref{OutageProbabilityWithSIC} and \eqref{OutageProbabilityWithoutSIC}, the outage probabilities of users $m$ and $n$ are given by
\vspace*{-1mm}
\begin{equation}
    \hspace*{-2mm}{\mathrm{P}}_{i,m}^{{\mathrm{out}}} \hspace*{-1mm}=\hspace*{-1mm} {\mathrm{Pr}}\left\{ {\frac{{{{\left| {{h_{i,m}}} \right|}^2}}}{{\sigma _{i,m}^2}} \hspace*{-1mm}<\hspace*{-1mm} \max \hspace*{-1mm}\left( {\frac{{{\gamma _{i,n}}}}{{{p_{i,n}} \hspace*{-1mm}-\hspace*{-1mm} {p_{i,m}}{\gamma _{i,n}}}},\frac{{{\gamma _{i,m}}}}{{{p_{i,m}}}}} \right)} \hspace*{-1mm}\right\}\; \text{and} \;\;
    {\mathrm{P}}_{i,n}^{{\mathrm{out}}}\hspace*{-1mm} =\hspace*{-1mm} {\mathrm{Pr}}\left\{ \frac{{{{\left| {{h_{i,n}}} \right|}^2}}}{{\sigma _{i,n}^2}} \hspace*{-1mm}<\hspace*{-1mm} \frac{{{\gamma _{i,n}}}}{{{p_{i,n}} \hspace*{-1mm}-\hspace*{-1mm} {p_{i,m}}{\gamma _{i,n}}}}  \right\},\label{OutageProbabilityWithoutSIC2}
\vspace*{-1mm}
\end{equation}
respectively. Note that a prerequisite $p_{i,n} - {p_{i,m}}{\gamma _{i,n}} > 0$ should be satisfied, otherwise the SIC will never be successful, i.e., ${\mathrm{P}}_{i,m}^{{\mathrm{out}}}=1$.

Combining the threshold definition in \eqref{OutageThreshold1} and the QoS constraint \textbf{C5} in \eqref{P1}, the feasible solution set spanned by \eqref{OutageProbabilityWithoutSIC2} can be characterized by the following equations:
\vspace*{-2mm}
\begin{equation}
    p_{i,n} - {p_{i,m}}{\gamma _{i,n}} > 0,\;
    \max \left( {\frac{{{\gamma _{i,n}}}}{{{p_{i,n}} - {p_{i,m}}{\gamma _{i,n}}}},\frac{{{\gamma _{i,m}}}}{{{p_{i,m}}}}} \right) \le  {\beta _{i,m}}, \; \text{and}\;\;
    \frac{{{\gamma _{i,n}}}}{{{p_{i,n}} - {p_{i,m}}{\gamma _{i,n}}}} \le {\beta _{i,n}}.\label{FeasibleSolution13}
\vspace*{-1mm}
\end{equation}
Then, recall that ${\beta _{i,m}} \ge {\beta _{i,n}}$, we have
${p_{i,m}} \ge \frac{{{\gamma _{i,m}}}}{{{\beta _{i,m}}}} \;\; \text{and}\;\;
{p_{i,n}} \ge \frac{{{\gamma _{i,n}}}}{{{\beta _{i,n}}}} + \frac{{{\gamma _{i,m}}{\gamma _{i,n}}}}{{{\beta _{i,m}}}}$.
Then, the optimal power allocation for user $m$ and user $n$ on subcarrier $i$ in Case I are given by
\vspace*{-3mm}
\begin{equation}\label{PowerAllocation12}
    {p_{i,m}^{\mathrm{I}}} = \frac{{{\gamma _{i,m}}}}{{{\beta _{i,m}}}} \;\;  \text{and}\;\;
    {p_{i,n}^{\mathrm{I}}} = \frac{{{\gamma _{i,n}}}}{{{\beta _{i,n}}}} + \frac{{{\gamma _{i,m}}{\gamma _{i,n}}}}{{{\beta _{i,m}}}},
\vspace*{-2mm}
\end{equation}
respectively, with the total transmit power
\vspace*{-3mm}
\begin{equation}\label{PowerComsumptionPerSubcarrier1}
    p^{\mathrm{total}}_{\mathrm{I}} = \frac{{{\gamma _{i,m}}}}{{{\beta _{i,m}}}} + \frac{{{\gamma _{i,n}}}}{{{\beta _{i,n}}}} + \frac{{{\gamma _{i,m}}{\gamma _{i,n}}}}{{{\beta _{i,m}}}}.
\vspace*{-2mm}
\end{equation}

Similarly, we can derive the total transmit power for Cases II, III, and IV as follows:
\vspace*{-3mm}
\begin{align}
p^{\mathrm{total}}_{\mathrm{II}}&= \frac{{{\gamma _{i,n}}}}{{{\beta _{i,n}}}} + \frac{{{\gamma _{i,m}}}}{{{\beta _{i,n}}}} + \frac{{{\gamma _{i,m}}{\gamma _{i,n}}}}{{{\beta _{i,n}}}},\label{PowerComsumptionPerSubcarrier2}\\[-1mm]
p^{\mathrm{total}}_{\mathrm{III}}&= \frac{1}{{1 - {\gamma _{i,m}}{\gamma _{i,n}}}}\left( {\frac{{{\gamma _{i,m}}}}{{{\beta _{i,n}}}} + \frac{{{\gamma _{i,m}}{\gamma _{i,n}}}}{{{\beta _{i,m}}}} + \frac{{{\gamma _{i,n}}}}{{{\beta _{i,m}}}} + \frac{{{\gamma _{i,m}}{\gamma _{i,n}}}}{{{\beta _{i,n}}}}} \right), \; \text{and}\label{PowerComsumptionPerSubcarrier3}\\[-1mm]
p^{\mathrm{total}}_{\mathrm{IV}}&= \frac{1}{{1 - {\gamma _{i,m}}{\gamma _{i,n}}}}\left( {\frac{{{\gamma _{i,m}}}}{{{\beta _{i,m}}}} + \frac{{{\gamma _{i,m}}{\gamma _{i,n}}}}{{{\beta _{i,n}}}} + \frac{{{\gamma _{i,n}}}}{{{\beta _{i,n}}}} + \frac{{{\gamma _{i,m}}{\gamma _{i,n}}}}{{{\beta _{i,m}}}}} \right),\label{PowerComsumptionPerSubcarrier4}
\end{align}
\par
\vspace*{-2mm}
\noindent
where in \eqref{PowerComsumptionPerSubcarrier3} and \eqref{PowerComsumptionPerSubcarrier4}, it is required that $0<1-\gamma _{i,m}\gamma _{i,n}<1$ is satisfied, otherwise no feasible power allocation can satisfy the QoS constraint. Besides, two prerequisites for \eqref{PowerComsumptionPerSubcarrier3} are
\vspace*{-1mm}
\begin{equation}
\hspace*{-2mm}p_{i,m} \hspace*{-1mm}=\hspace*{-1mm} \frac{1}{{1 \hspace*{-1mm}-\hspace*{-1mm} {\gamma _{i,m}}{\gamma _{i,n}}}}\left( \frac{{{\gamma _{i,m}}}}{{{\beta _{i,n}}}} \hspace*{-1mm}+\hspace*{-1mm} \frac{{{\gamma _{i,m}}{\gamma _{i,n}}}}{{{\beta _{i,m}}}}\right) \hspace*{-1mm}\ge\hspace*{-1mm} \frac{{{\gamma _{i,m}}}}{{{\beta _{i,m}}}}\;\; \text{and} \;\;
p_{i,n} \hspace*{-1mm}=\hspace*{-1mm} \frac{1}{{1 \hspace*{-1mm}-\hspace*{-1mm} {\gamma _{i,m}}{\gamma _{i,n}}}}\left( \frac{{{\gamma _{i,n}}}}{{{\beta _{i,m}}}} \hspace*{-1mm}+\hspace*{-1mm} \frac{{{\gamma _{i,m}}{\gamma _{i,n}}}}{{{\beta _{i,n}}}}\right) \hspace*{-1mm}\ge\hspace*{-1mm} \frac{{{\gamma _{i,n}}}}{{{\beta _{i,n}}}}, \label{CaseIII1}
\end{equation}
respectively, otherwise there is no feasible solution.
From \eqref{PowerComsumptionPerSubcarrier1}, \eqref{PowerComsumptionPerSubcarrier2}, \eqref{PowerComsumptionPerSubcarrier3}, and \eqref{PowerComsumptionPerSubcarrier4}, we obtain
\vspace*{-3mm}
\begin{equation}
p^{\mathrm{total}}_{\mathrm{II}} \ge p^{\mathrm{total}}_{\mathrm{I}},\;
p^{\mathrm{total}}_{\mathrm{III}} > p^{\mathrm{total}}_{\mathrm{I}},\; \text{and}\;
p^{\mathrm{total}}_{\mathrm{IV}} > p^{\mathrm{total}}_{\mathrm{I}},\label{Conclusion3}
\vspace*{-3mm}
\end{equation}
which means that the SIC decoding order in Case I is optimal for minimizing the total transmit power. Note that the relationship of $p^{\mathrm{total}}_{\mathrm{III}} > p^{\mathrm{total}}_{\mathrm{I}}$ can be easily obtained by lower bounding $p_{i,n}$ by $\frac{{{\gamma _{i,n}}}}{{{\beta _{i,n}}}}$ according to \eqref{CaseIII1}. Interestingly, we have $p^{\mathrm{total}}_{\mathrm{II}} = p^{\mathrm{total}}_{\mathrm{I}}$ for ${\beta _{i,m}} = {\beta _{i,n}}$, which means that it will consume the same total transmit power for Case I and Cases II when both users have the same CNR outage threshold.

\vspace*{-5mm}
\subsection{Proof of Theorem \ref{Theorem2}} \label{AppendixB}
\vspace*{-1mm}
The constraint relaxed problem in \eqref{P3Continuous} is equivalent to the problem in \eqref{P3} if the optimal solution of \eqref{P3Continuous} still satisfies the relaxed constraints in \eqref{P3}.
For notational simplicity, we define the objective function in \eqref{P3} as
$f\left( {\boldsymbol{{\gamma} }} \right) = \sum\limits_{i = 1}^{{N_{\mathrm{F}}}} {\sum\limits_{m = 1}^M {\frac{{{{\gamma}}_{i,m}}}{{{\beta _{i,m}}}}} }  + \sum\limits_{i = 1}^{{N_{\mathrm{F}}}} {\sum\limits_{m = 1}^{M-1} {\sum\limits_{n = m + 1}^M {\frac{{{{\gamma} _{i,m}}{{\gamma} _{i,n}}}}{{\max \left( {{\beta _{i,m}},{\beta _{i,n}}} \right)}}} } }$.
For the optimal solution of \eqref{P3Continuous}, $\left({{\overline{s}_{i,m}^*}},{\gamma}_{i,m}^*\right)$, $i \in \left\{1, \ldots ,{N_{\mathrm{F}}}\right\}$, $m \in \left\{ 1, \ldots ,M \right\}$, and the corresponding optimal value, $f\left( {\boldsymbol{{\gamma} }^*} \right)$, we have the following relationship according to constraint $\overline{\text{\textbf{C8}}}$ in \eqref{P3Continuous}:
\vspace*{-2mm}
\begin{equation}\label{C5Again}
{\overline{s}_{i,m}^*} =
\left\{
\begin{array}{ll}
1 & \text{if}\;{\gamma}_{i,m}^* > 0,\\
\overline{s}_{i,m}^* \in \left[ {0,\;1} \right] & \text{if}\;{\gamma}_{i,m}^* = 0,
\end{array}
\right.
\vspace*{-2mm}
\end{equation}
where ${\boldsymbol{{\gamma} }^*} \in \mathbb{R}^{N_{\mathrm{F}} M \times 1}$ denotes the set of ${\gamma}_{i,m}^*$.
Note that reducing $\overline{s}_{i,m}^*$ to zero where ${\gamma}_{i,m}^* = 0$ will not change the optimal value $f\left( {\boldsymbol{{\gamma} }^*} \right)$ and will not violate constraints $\overline{\text{\textbf{C1}}}$, $\overline{\text{\textbf{C7}}}$, and $\overline{\text{\textbf{C8}}}$ in \eqref{P3Continuous}.
Through the mapping relationship \eqref{OptimalConvert2}, we have
\vspace*{-2mm}
\begin{equation}
{s}_{i,m}^* \in \left\{0,\;1\right\} \subseteq \left[ {0,\;1} \right],\;
\sum\limits_{m = 1}^M {{{s}_{i,m}^*}} \le \sum\limits_{m = 1}^M {{\overline{s}_{i,m}^*}} \le 2, \; \text{and} \;
{{\gamma}}_{i,m}^* = {{s}_{i,m}^*} {\gamma}_{i,m}^*
\vspace*{-2mm}
\end{equation}
which implies that $\left({{{s}_{i,m}^*}},{\gamma}_{i,m}^* \right)$ is also the optimal solution of \eqref{P3Continuous} with the same optimal objective value $f\left( {\boldsymbol{{\gamma} }^*} \right)$. More importantly, $\left({{{s}_{i,m}^*}},{\gamma}_{i,m}^* \right)$ can also satisfy the constraints \textbf{C1}, \textbf{C7}, and \textbf{C8} in \eqref{P3}.
Therefore, the problem in \eqref{P3Continuous} is equivalent to the problem in \eqref{P3}, and the optimal solution of \eqref{P3} can be obtained via the mapping relationship in \eqref{OptimalConvert2} from the optimal solution of \eqref{P3Continuous}.
\vspace*{-2mm}
\end{appendices}

\bibliographystyle{IEEEtran}
\bibliography{NOMA}

\end{document}